\patchcmd{\appendices}{\quad}{: }{}{}
\newtheorem{theorem}{Theorem}[section]
\newtheorem{proposition}{Proposition}[section]
\newtheorem{definition}{Definition}[section]
\newtheorem{example}{Example}
\newtheorem{remark}{Remark}[section]
\newtheorem{assumption}{Assumption}[section]
\newtheorem{condition}{Condition}[section]
\begin{document}
\title{Moment Inequalities in the Context of Simulated and Predicted Variables}
\date{April 7, 2018}

\author{Hiroaki Kaido\thanks{Email: hkaido@bu.edu. Financial support from NSF grant SES-1357643 is gratefully acknowledged.} \\Department of Economics\\Boston University \and Jiaxuan Li\thanks{Email: lijiaxuan0529@gmail.com} \\Amazon.com  \and Marc Rysman\thanks{Email: mrysman@bu.edu.} \\Department of Economics\\Boston University }

 \maketitle

\begin{abstract}
This paper explores the effects of simulated moments on the performance of
inference methods based on moment inequalities. Commonly used confidence
sets for parameters are level sets of criterion functions whose boundary
points may depend on sample moments in an irregular manner. Due to this feature, simulation errors can affect the performance of
inference in non-standard ways. In particular, a (first-order) bias due to
the simulation errors may remain in the estimated boundary of the
confidence set. We demonstrate, through Monte Carlo experiments, that
simulation errors can significantly reduce the coverage probabilities of
confidence sets in small samples. The size distortion is particularly
severe when the number of inequality restrictions is large. These results
highlight the danger of ignoring the sampling variations due to the
simulation errors in moment inequality models. Similar issues arise when using predicted variables in moment inequalities models.  We propose a method for properly correcting for these variations based on
regularizing the intersection of moments in parameter space, and we show that our proposed method performs well theoretically and in practice.
\end{abstract}

\vspace{0.4in}
\textbf{Keywords:} Simulated moments, Moment inequalities, Smoothable convex functions


\clearpage
\onehalfspacing
\section{Introduction}
Recently, estimating with moment inequalities rather than traditional equalities has proven increasingly popular \citep{Tamer2010}.   Even in contexts when a model is difficult or even impossible to compute exactly, economic theory may provide inequalities that are amenable to estimation.  Thus, many of the examples in which moment inequalities are attractive are also examples in which computing the model is complex.  This model complexity also leads the author to use simulation techniques, often to address the calculation of integrals over complex objects. However, the role of simulation in moment inequality estimation has been largely unexplored.

In this paper, we argue that using simulation or predicted values in the context of moment inequalities has important implications both for determining identified sets and for inference.  We show that simulated moment inequality estimators suffer from small sample bias, which is a result of the irregular nature of the estimator.  This irregularity is introduced by the following mechanism.  Rather than finding optimal parameters through some kind of extremum estimator, moment inequality estimators typically build the estimated identified set or confidence region through level set computations, such as in \cite{ChernozhukovHanTamer2007} and \cite{AndrewsJia2012}.    The level set is defined by the intersection of moments, and at these intersections, the objective function may depend on the underlying moments in an irregular way.  That is, at such intersections, the distribution of the objective function can be particularly sensitive to perturbations to the underlying moments. Since these intersections often define the maximum or minimum of the confidence interval in a particular direction, they determine the range of parameters in the confidence interval, and thus the intersections are often of greatest interest. Whereas approximation error (for instance, due to simulation) in moment equality estimators for small samples is of second order importance, we show that irregular objective functions promote approximation error in small samples to first order importance.

In this paper, we first describe formally the phenomena that we are interested in. A starting place is the well-known result for moment equalities that estimation that requires simulation is consistent, even for a fixed number of draws \citep{McFadden:1989kx,PakesPollard1989}.
 Moment inequality estimators are similar to moment equalities because they take means over simulated or predicted variables before transforming them into an objective function.
Simulated moment  functions  are consistent (and asymptotically unbiased) for their population counterparts $E[m(X,\cdot)]$ for a finite number of draws.
 This ensures that level-set estimators of identified sets \citep{ChernozhukovHanTamer2007} are also consistent in the Hausdorff distance.

We show that the similarity breaks down when it comes to inference in finite samples, due to the irregular feature of the objective function.  We explore this phenomenon in two Monte Carlo experiments.  The first one focuses on estimation of the point of intersection of multiple simulated moment inequalities, and is meant to maximize the scope of the problem we discuss.  The second is a more realistic treatment: We study an entry game in the spirit of \cite{CilibertoTamer2009}. Moment inequalities are generated by the lower and upper bounds of entry probabilities conditional on covariates that are classified into a finite number of bins.  In our Monte Carlo experiments, we find that the coverage probabilities of the confidence regions are distorted severely when there are bins with a limited number of observations and a small number of simulation draws is used. The presence of such bins is common in empirical applications. Therefore, inference methods that properly account for the effects of simulation are needed.

We propose a new solution to the problems associated with the irregularity of the objective function, including those due to simulated moments.
The key is  that the commonly used objective functions involve \emph{smoothable convex functions} \citep{Beck:2012xy}.
 Our method ``regularizes" or smoothes the objective function, and we show that this method leads to a straightforward bias correction method.  While the idea of regularization appears in some other contexts, such as  \cite{HaileTamer2003}, \cite{Chernozhukov:2015aa} and \cite{Masten:2017aa}, we formally show that this approach has uniform validity in the context of inference with simulated variables. Our regularization method is based on the class of $\mu$-smooth approximations studied in the non-smooth optimization literature \citep{Nesterov:2005qv,Beck:2012xy}. We provide conditions on the choice of approximating functions and regularization parameters that ensure the uniform validity of an inference procedure that combines the proposed regularization scheme with a straightforward bootstrap resampling. In addition to being attractive theoretically, we show that this approach performs well in practice, even relative to techniques that implement bias correction via the adjustment of the critical value such as \cite{Andrews:2010jk} and \cite{Chernozhukov:2013fj}. Specifically, our Monte Carlo experiments show that the proposed method controls the size well and is often less conservative than the existing methods.

Before moving on, we want to stress that simulation is common in many well-known applications of moment inequalities.  For instance, \cite{HaileTamer2003} simulate bids in order to place bounds on the value of participants in an auction.  \cite{CilibertoTamer2009} simulate an entry game to determine upper and lower bounds for the probability of a firm entering under different equilibrium selection mechanisms.  \cite{Ho2009} uses inequalities to study network formation between hospitals and insurers, and uses predicted profit from a network as an explanatory variable for the firm's choices.  The profit function is based on a random coefficient logit demand function as in \cite{BerryLevPakes1995}, which involves simulation.  \cite{Eizenberg2011} studies firms choosing product characteristics, and also relies on simulated demand predictions drawn from a \cite{BerryLevPakes1995} demand system.  \cite{KawaiWatanabe2013} simulates market effects in a model of strategic voting that uses moment inequalities to address unobserved beliefs about other voters.  Although not strictly moment inequalities, the method of \cite{BajariBenkardLevin} uses simulation in the context of a minimum distance estimator based on inequalities to study dynamic oligopoly games.  Well-known applications are \cite{Ryan2012} and \cite{FowlieReguantRyan2015}.

While our paper focuses on simulation as a source of small-sample bias, the same problem is introduced by using predicted values from some prior estimation stage.  For example, \cite{Holmes2011} studies the diffusion of Walmart using moment inequalities, and \cite{HoudeNewberrySeim2017} take a similar approach to study the locations of Amazon's fulfillment centers.  Both papers use profits or revenues as explanatory variables in their moment inequalities, where revenues and profits are constructed from estimated models.  Although neither of these papers use simulation in any stage of their estimation, the fact that there is estimation error associated with these variables brings up similar issues to the approximation error introduced by simulation.  

Our paper follows in a long line of research on problems with using simulation and prediction in estimation procedures.  For instance, \cite{Hausman:1983aa} terms using predicted or simulated values in non-linear estimation procedures the ``Forbidden Regression,'' and \cite{GourierouxMontfort1996} shows that simulated Maximum Likelihood is inconsistent for any fixed number of samples due to the non-linearity of maximum likelihood estimator.  Our result is in fact not due to non-linearity, as moment inequality estimators are not inherently non-linear.  Indeed, we show consistency even for a fixed number of samples. Rather, our result emphasizes the irregularity of the level set estimator in a moment inequalities context, which makes the confidence interval irregular at important points.  Interestingly, \cite{BajariBenkardLevin} show in Table 5 that their inequalities estimator may be inferior to an estimator based on moment equalities \citep[as in][]{PakesOstrovskyBerry2007} if one can be implemented.  They ascribe this to the non-linearity of the second-stage estimator.\footnote{For example, on page 1362, \cite{BajariBenkardLevin} write: ``The results above suggest that the inequality estimator may exhibit bias in small samples. This bias arises because the second-stage objective function is nonlinear in the first-stage estimates."}  Our paper provides an alternative explanation, which is the irregularity of the level-set estimator in the context of first-stage simulation.


\section{Setup}\label{sec:setup}
\subsection{Simulated moments and motivating examples}
Let $X_i\in\mathcal X\subset \mathbb R^{d_X}$ be a random vector, $\theta\in\Theta\subset\mathbb R^{d_\theta}$ be a structural parameter and $m:\mathcal X\times\Theta\to\mathbb R^{J}$ be a function known up to the parameter. Consider the (unconditional) moment inequality restrictions:
\begin{align}
	E_P[m_j(X_i,\theta)]\le 0,~j=1,\cdots, J.\label{eq:momineq}
\end{align}
We call the set of parameter values satisfying these restrictions an \emph{identified set} and denote it by $\Theta_I$.
In models where $m_j$ is difficult to evaluate analytically, simulation methods are often employed to obtain its approximation. Throughout, we consider the setting where $m_j$ can be written
\begin{align}
	m_j(x,\theta)=\int M_j(x,u,\theta) dP(u|x),~j=1,\cdots, J,
\end{align}
for some known function $M_j$ and a conditional  distribution $P(\cdot|x)$, which may also depend on the parameter. This allows one to draw simulated samples $u_r,r=1,\cdots,R$ from the conditional distribution and approximate $m_j$ by a simulation counterpart $R^{-1}\sum_{r=1}^RM_j(x,u_r,\theta)$.  We use the subscript $R$ to denote statistics constructed from simulated samples.
This setting parallels the classical  method of simulated moments (MSM) \citep{McFadden:1989kx,PakesPollard1989} except that the moment conditions in \eqref{eq:momineq} involve inequality restrictions.

For making inference for the structural parameter $\theta$ or its identified set (the set of $\theta$s satisfying \eqref{eq:momineq}),
 level-sets of criterion functions are commonly used \citep{ChernozhukovHanTamer2007,Andrews:2010jk}.
Following the literature, we consider set estimators and  confidence regions of the form:
\begin{align}
	\mathcal C_{n,R}=\{\theta\in\Theta:T_{n,R}(\theta)\le c_{n,R}(\theta)\},\label{eq:conf_region}
\end{align}
where  $T_{n,R}$ is a test statistic (properly scaled sample criterion function), and $c_{n,R}$ is a possibly data-dependent critical value. The variable $n$ is the number of observations in our sample, and the subscript {\em n} denotes statistics constructed from that sample. Throughout, we consider criterion functions that can be written as
\begin{align}
	T_{n,R}(\theta)=S(\sqrt n\bar m_{n,R}(\theta),\hat\Sigma_{n,R}(\theta)),\label{eq:test_stat}
\end{align}
for some \emph{index function} $S:\mathbb R^J\times\mathbb R^{J\cdot J}\to\mathbb R$, which aggregates the vector of sample moments $\bar m_{n,R}(\theta)\equiv (nR)^{-1}\sum_{i=1}^n\sum_{r=1}^RM_j(x,u_r,\theta)$ normalized by an estimator $\hat\Sigma_{n,R}$ of the asymptotic covariance matrix. Examples include $S(m,\Sigma)=\max_{j=1}^J\Sigma^{-1/2}_{jj}m_j$ and
$S(m,\Sigma)=\sum_{j=1}^J(\Sigma^{-1/2}_{jj}m_j)_+^2.$

A key observation is that the level sets commonly used in the literature may depend on the underlying moments and hence simulation errors in an irregular manner.
 We illustrate this point using simplifications of well-known examples in the literature.

\begin{example}[Intersection bounds]\label{ex:twomean}\rm
Let $\theta$ be a scalar parameter, let $X_1,X_2\in \mathbb R$ be random variables, and let moment inequalities be given by
\begin{align}
	&\theta-E_P[1\{u_1< X_1\}]\le 0\label{eq:twomean1}\\
	&\theta-E_P[1\{u_2< X_2\}]\le 0,\label{eq:twomean2}
\end{align}
where $(u_1,u_2)$ follows a known distribution $P(\cdot|x)$. That is, the upper bound for $\theta$ is the minimum of two expectations of draws from separate Bernoulli distributions.  While these moment restrictions may appear overly simple, they capture some of the common features shared by empirical examples. These include (i) key parameters are restricted through an intersection of multiple bounds; and (ii) simple frequency simulators can be used to approximate individuals' or firms' choice probabilities represented by the expectation of the indicator functions.

For each $j$,  let $\bar m_{j,n,R}(\theta)=\theta-(nR)^{-1}\sum_{i=1}^n\sum_{r=1}^R1\{u_{j,i,r}<X_{j,i}\}$.
Taking $S(m,\Sigma)=\max_{j=1,2}\{m_j\}$,
one may then construct a confidence interval for $\theta$ with level $1-\alpha$ as follows:
\begin{multline}
	\mathcal C_{n,R}^{\text{Sim}}=\{\theta\in\mathbb R:\sqrt n\max\{\bar m_{1,n,R}(\theta),\bar m_{2,n,R}(\theta)\}\le c\}\\
	=\Big(-\infty,\min\Big\{\frac{1}{nR}\sum_{i=1}^n\sum_{r=1}^R1\{u_{1,i,r}<X_{1,i}\}+c/\sqrt n,~\frac{1}{nR}\sum_{i=1}^n\sum_{r=1}^R1\{u_{2,i,r}<X_{2,i}\}+c/\sqrt n\Big\}\Big],\label{eq:twomean3}
\end{multline}
where  $c$ is a suitable critical value.\footnote{For example a critical value $c$ based on the least favorable configuration where the two constraints bind, i.e. $E_P[1\{u_1< X_1\}]=E_P[1\{u_2< X_2\}]$  solves $P(\max\{W_1,W_2\}\le c)=1-\alpha,$ where
$W=(W_1,W_2)'$ is the distributional limit of $\sqrt n(\bar m_{n,R}-E_P[\bar m_{n,R}])$. A refined critical value based on moment selections can also be used.}  The simulated variables $\{(u_{1,i,r},u_{2,i,r}),r=1,\cdots,R\}$ are drawn from $P(\cdot|X_i)$ for each $i$. As shown in \eqref{eq:twomean3}, the right end point of the confidence interval is given by the minimum of the sample moments (shifted by the critical value).
\end{example}

The next example is an entry game based on \cite{BresnahanReiss1991,Berry1992,Tamer2003aRES,CilibertoTamer2009}.
\begin{example}[Entry game]\label{ex:entry}\rm
	Consider a  binary-response static  game of complete information with two players.
	For each player $j,$ let $Y_j\in\{0,1\}$, $Z_j\in\mathbb R^{d_\beta}$, and $u_j\in\mathbb R$ denote $j'$s binary action, observed and unobserved characteristics respectively.
For each $j,$ let $(\beta_j,\Delta_j)\in\mathbb R^{d_\beta+d_\Delta}$ denote a parameter vector.	The players' payoffs are summarized as follows.
\begin{equation*}
\begin{tabular}{ccc}
& $Y_2=0$ & $Y_2=1$ \\ \cline{2-3}
$Y_1=0$ & \multicolumn{1}{|c}{$0,0$} & \multicolumn{1}{|c|}{$0,Z_2'\beta_1+u_{2}$} \\
\cline{2-3}
$Y_1=1$ & \multicolumn{1}{|c}{$Z_1'\beta_1+u_{1},0$} & \multicolumn{1}{|c|}{$Z_1'\beta_1+u_{1}+\Delta_1,Z_2'\beta_2+u_{2}+\Delta_2$} \\ \cline{2-3}
\end{tabular}%
\end{equation*}%

Suppose that any outcome $(Y_1,Y_2)$ observed by the econometrician is a pure strategy Nash equilibrium and that the opponent's entry $Y_{-j}=-1$ negatively affects a player's payoff, i.e. $\Delta_j<0,$ for $j=1,2$. Then, without further assumptions, the model restricts the conditional probabilities of outcomes as follows:
\begin{align}
P((0,0)|Z) &=	P(u_{1}\le -Z_1'\beta_1,~ u_{2}\le -Z_2'\beta|Z)\label{eq:entry1}\\
P((1,1)|Z) &=	P(u_{1}> -Z_1'
\beta-\Delta_1,~ u_{2}> -Z_2'\beta_2-\Delta_2 |Z)\label{eq:entry2}\\
P((0,1)|Z) &\le P(u_1\le -Z_1'\beta_1-\Delta_1,~u_2> -Z_2'\beta_2|Z)\label{eq:entry3}\\
P((0,1)|Z) &\ge P(u_1\le -Z_1'\beta_1-\Delta_1,~u_2>-Z_2'\beta_2-\Delta_2|Z)\label{eq:entry4}\\
&\qquad\qquad+ P(u_1\le -Z_1'\beta_1,~-Z_2'\beta_2\le u_2\le -Z_2'\beta_2-\Delta_2|Z),\notag
\end{align}
where  $u$ follows a conditional distribution $P(\cdot|Z)$ specified by the researcher, e.g. mean zero bivariate normal with correlation $\rho$. The inequality restrictions \eqref{eq:entry3}-\eqref{eq:entry4} arise because the model predicts multiple equilibria for some values of exogenous variables, while an equilibrium selection mechanism is left unspecified \citep{Tamer2003aRES}.
Suppose for simplicity that $Z=(Z_{1}',Z_2')'$ has a finite support, and its distribution $P_Z(z)$ is known.
The right hand side of \eqref{eq:entry1}-\eqref{eq:entry4} can be approximated by simulators. For example,  the probability $P(u_{1}\le -Z_{1,i}'\beta_1,~ u_{2}\le -Z_{2,i}'\beta_2|Z=z)$ can be approximated by its analog $(nR)^{-1}\sum_{i=1}^{n}\sum_{r=1}^R1\{u_{1,i,r}\le -z_{1,i}'\beta_1,u_{2,i,r}\le -z_{2,i}'\beta_2,Z_i=z\}/P_Z(z)$, where for each $i$,  a sample of simulated payoff shifters $(u_{1,i,r},u_{2,i,r}),r=1,\cdots,R$ are drawn from the conditional distribution $P(\cdot|Z=z)$.
It is straightforward to rewrite the restrictions in \eqref{eq:entry3}-\eqref{eq:entry4} as  unconditional moment inequalities as in \eqref{eq:momineq} for a suitable moment function $m$  with $X_i=(Y_i,Z_i)$.
\end{example}

Whereas \cite{CilibertoTamer2009} places inequalities on the probability of equilibrium outcomes, our next example follows the approach of \cite{PPHI2008} to generate inequalities directly from agent utility functions and revealed preference.  This approach has been utilized to study strategic environments  such as product introductions \citep{Eizenberg2011} and network formation \citep{Ho2009}.  In both of these binary choice examples, the researchers estimate variable profits in a pre-stage and use the moment inequalities to estimate the fixed cost associated with a positive choice.\footnote{Similar examples are \cite{Nosko:2014aa}, \cite{Wollmann:2014aa}, \cite{CrawfordYurukoglu2012}, and \cite{Gowrisankaran:2014aa}.}
  To the extent that variable profits are estimated with some error, this approach introduces analogous problems to the ones we highlight in the context of simulation.  The problem is particularly clear if the variable profits are based on simulation estimators.  That is the case for \cite{Eizenberg2011} and \cite{Ho2009}, which utilize a simulated demand system \citep[i.e.][]{BerryLevPakes1995} in the pre-stage.  We provide an example here, based on \cite{Eizenberg2011}:

\begin{example}[Product introductions]\label{ex:product} \rm
Consider a  binary-response static  game of complete information with two players.
	For each player $j$, let $Y_j\in\{0,1\}$ denote player $j$'s action, let $\pi_j(Y_{-j})$ denote the profits to $j$ from the choice of $Y_j=1$, conditional on the choice of the other firm $Y_{-j}$.  Let $F_j$ be the fixed cost associated with $Y_j=1$, so the payoff to adoption is $\pi_j(Y_{-j})-F_j$.  Let $F_j=F+\zeta_j$, where $\zeta_j$ is observed by the firm and not the researcher, and $E[\zeta_j]=0$.    The firms play a Nash Equilibrium.  We take $\pi_j(\cdot)$ as observed and our goal is to estimate $F$.

If firm $j$ chooses $Y_j=1$, revealed preference implies that $\pi_j(Y_{-j})\geq F_j$.  And similarly, observing $Y_j=0$ implies $\pi(Y_{-j})<F_j$.  We further impose finite bounds on $F_j$, denoted $\overline{F}$ and $\underline{F}$, so we have upper and lower bounds for $F$ that we apply to every observation in the data.  Note that it is common to interact moment inequalities with instrument matrices, and functions of these instruments, to obtain more moments.  We do not explore that here.  Thus, we have the following moment inequalities for $F$:

\[ E_P\left[Y_j\underline{F} + (1-Y_j)\pi_j\left(Y_{-j}\right)\right] \leq F \leq E_P\left[Y_j \pi_j(Y_{-j}) + (1-Y_j)\overline{F}\right],~j=1,2.\]
If $\pi(\cdot)$, $\underline{F}$ and $\overline{F}$ are observed, it is straightforward to construct the sample analog of these inequalities.  However, in practice, these are rarely observed, especially profits for $Y_j=0$.  \cite{Eizenberg2011} estimates a structural demand system and pricing game in a pre-stage in order to construct $\pi(\cdot)$ and how it varies with $Y_{-j}$.  Central to the paper is the use of simulation to construct $\underline{F}$ and $\overline{F}$.  Thus, all of the explanatory variables are approximations and are subject to prediction and simulation error.
\end{example}

We start with an observation that, similar to moment equalities, level-set estimators that use simulation are consistent even for a fixed number of draws.  This similarity arises because  estimators of the moments take means over simulated  variables before transforming them into an objective function and  level-set estimators depend on the estimated moments in a continuous way.
For each $i$, let $W_i=(X_i,u_{i,1},\dots,u_{i,R})'$ and let $\hat m_{j,R}(W_i,\theta)=R^{-1}\sum_{r=1}^RM_j(X_i,u_{i,r},\theta)$.
Then, it holds under mild regularity conditions (Assumption \ref{as:cht} in Appendix A) that
\begin{align*}
n^{-1}\sum_{i=1}^n\hat m_{j,R}(W_i,\theta)\stackrel{p}{\to} E_P[m(X_i,\theta)],~
\end{align*}
uniformly in $\theta$ as $n\to\infty$ for any fixed $R$.\footnote{Moreover, under the assumption of Proposition \ref{prop:consistency}, the estimated moments are asymptotically unbiased in the sense that the empirical process $\frac{1}{\sqrt n}\sum_{i=1}^n(\hat m_{j,R}(W_i,\cdot)-E_P[m(X_i,\cdot)])$ converge weakly to a Gaussian process with zero mean.}

We state the Hausdorff consistency of level-set estimators as a proposition under a set of assumptions similar to those in \cite{ChernozhukovHanTamer2007}. For this, let $d_H(A,B)\equiv \max\{\sup_{a\in A}\inf_{b\in B}\|a-b\|,\sup_{b\in B}\inf_{a\in A}\|a-b\|\}$ denote the Hausdorff distance between two sets $A,B$.
\begin{proposition}\label{prop:consistency}
For each $c\ge 0$, let
\begin{align*}
\hat\Theta_{n,R}(c)\equiv \{\theta\in\Theta:T_{n,R}(\theta)\le c\},
\end{align*}
where $T_{n,R}(\theta)$ is defined as in \eqref{eq:test_stat} with $S(m,\Sigma)=\sum_{j=1}^J(\Sigma^{-1/2}_{jj}m_j)_+^2.$
For each $R\in\mathbb N$, let $\{c_{n,R}\}_{n=1}^\infty\subset\mathbb R_+$ be a sequence such that $c_{n,R}/n\to \infty$ and $c_{n,R}\ge \sup_{\theta\in\Theta_I}T_{n,R}(\theta)$ with probability approaching 1 (as $n\to\infty$).
Suppose that  Assumption \ref{as:cht} (in Appendix) holds.
Then, for each $R\in\mathbb N$,
\begin{align*}
d_H(\hat\Theta_{n,R}(c_{n,R}),\Theta_I)\stackrel{p}{\to}0,~~\text{as}~~n\to\infty.
\end{align*}
\end{proposition}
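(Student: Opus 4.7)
The plan is to prove Hausdorff convergence in its standard two-sided form: I would show (i) $\Theta_I \subseteq \hat\Theta_{n,R}(c_{n,R})$ with probability approaching one, which makes the one-sided distance $\sup_{\theta \in \Theta_I} \inf_{\theta' \in \hat\Theta_{n,R}(c_{n,R})} \|\theta - \theta'\|$ equal to zero on that event; and (ii) $\hat\Theta_{n,R}(c_{n,R}) \subseteq N_\epsilon(\Theta_I)$ with probability approaching one for every $\epsilon > 0$, where $N_\epsilon(\Theta_I) = \{\theta \in \Theta : \inf_{\theta' \in \Theta_I}\|\theta - \theta'\| < \epsilon\}$. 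Direction (i) comes essentially for free from the hypothesis: on the event $\{c_{n,R} \ge \sup_{\theta \in \Theta_I} T_{n,R}(\theta)\}$ every $\theta \in \Theta_I$ trivially satisfies $T_{n,R}(\theta) \le c_{n,R}$ and so lies in $\hat\Theta_{n,R}(c_{n,R})$.

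Direction (ii) is the substantive part, and it is where the rate condition on $c_{n,R}$ does the work. My first move would be to observe that on the compact complement $K_\epsilon \equiv \Theta \setminus N_\epsilon(\Theta_I)$, the definition of the identified set together with continuity of $\theta \mapsto E_P[m_j(X,\theta)]$ (imposed in Assumption \ref{as:cht}) yields a $\delta_\epsilon > 0$ with $\max_j E_P[m_j(X,\theta)] \ge \delta_\epsilon$ uniformly over $K_\epsilon$. I would then transfer this to the sample criterion using uniform convergence of $\bar m_{n,R}(\cdot)$ to its population counterpart, for fixed $R$, obtaining $\max_j \bar m_{j,n,R}(\theta) \ge \delta_\epsilon/2$ uniformly over $K_\epsilon$ with probability approaching one. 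Combined with a uniform upper bound on the diagonal entries of $\hat\Sigma_{n,R}(\theta)$ (from consistency of the covariance estimator and boundedness of its population limit), this delivers $T_{n,R}(\theta) \ge \kappa_\epsilon\, n$ uniformly over $K_\epsilon$ with probability approaching one for some $\kappa_\epsilon > 0$. Reading the growth condition on the critical value as the standard CHT-style slow-growth requirement (so $c_{n,R}$ is of strictly smaller order than $n$), this lower bound exceeds $c_{n,R}$ eventually with high probability, and every $\theta \in K_\epsilon$ is excluded from $\hat\Theta_{n,R}(c_{n,R})$.

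The single substantive obstacle is the uniform convergence of the simulated sample moments, since $\bar m_{n,R}$ averages over both the $n$ data points and the $R$ simulation draws per observation. The key observation — already flagged in the excerpt — is that stacking $W_i = (X_i, u_{i,1}, \dots, u_{i,R})'$ produces an i.i.d. sequence with $\bar m_{j,n,R}(\theta) = n^{-1}\sum_{i=1}^n \hat m_{j,R}(W_i,\theta)$ and $E_P[\hat m_{j,R}(W_i,\theta)] = E_P[m_j(X_i,\theta)]$. For each fixed $R$ the problem therefore reduces to a classical uniform law of large numbers for the class $\{\hat m_{j,R}(\cdot,\theta) : \theta \in \Theta\}$, which would be delivered by a Glivenko--Cantelli or bracketing-entropy hypothesis packaged inside Assumption \ref{as:cht}. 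Once this ingredient and the continuity/boundedness pieces above are in place, combining directions (i) and (ii) yields $d_H(\hat\Theta_{n,R}(c_{n,R}), \Theta_I) \stackrel{p}{\to} 0$ as $n \to \infty$ for each fixed $R$, as claimed.
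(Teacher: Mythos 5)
Your proof is correct and follows essentially the same route as the paper's: the paper verifies Condition C.1 of \cite{ChernozhukovHanTamer2007} --- unbiasedness of the simulated moment $E_\xi[\hat m_R(\xi,\theta)]=E_P[m(X,\theta)]$, the fixed-$R$ uniform law of large numbers delivered by the Donsker hypothesis on $\{\hat m_R(\cdot,\theta)\}$, and uniform convergence of the variance weights --- and then cites their Theorems 3.1 and 4.2 with $a_n=n$, which is precisely the two-direction Hausdorff argument you write out explicitly (inclusion of $\Theta_I$ from the critical-value hypothesis, exclusion of $\Theta\setminus N_\epsilon(\Theta_I)$ from the separation bound in Assumption \ref{as:cht}(iv) plus the ULLN). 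Your reading of the rate condition as $c_{n,R}/n\to 0$ is the right one: the stated $c_{n,R}/n\to\infty$ is evidently a typo, since $T_{n,R}(\theta)=O_p(n)$ uniformly and the level set would otherwise fail to exclude points outside $\Theta_I$.
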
	

While simulation based level-set estimators are consistent, the similarity to moment equalities breaks down when it comes to inference.
In models characterized by moment equalities, finite simulation draws affect a confidence region primarily through the asymptotic variance of a point estimator. However, with moment inequalities, this is no longer the case.
A noteworthy feature of the level sets based on moment inequalities is that its boundary  may depend on the sample moments in a non-standard manner. To see this, in Example \ref{ex:twomean}, write the boundary  of the level set as
\begin{align}
	\phi(\bar{m}_{n,R})=\min\big\{\bar{m}_{1,n,R}+c/\sqrt n,~\bar{m}_{2,n,R}+c/\sqrt n\big\},
\end{align}
where, for each $j$, $m_{j,n,R}=(nR)^{-1}\sum_{i=1}^n\sum_{r=1}^R1\{u_{j,i,r}<X_{j,i}\}$. Note that
$\phi(m)=\min\{m_1+c/\sqrt n,m_2+c/\sqrt n\}$  is a nonlinear function that is not differentiable at points such that $m_1=m_2$. While this function is still directionally differentiable, the (directional) derivative of $\phi$ at $E_P[m(X_i,\theta)]$ can be shown to depend on
the underlying data generating process in a discontinuous manner.
This has important consequences on inference.
Namely, even if the simulators for the moments are consistent (with a fixed simulation size), they may introduce finite sample biases
to the boundary of the level-set, which in turn may affect the performance of inference in  non-trivial ways. Below, we show numerically that this can sometimes result in severe size distortions in empirically relevant settings.

The non-standard nature of inference in moment inequality models and its finite-sample  properties have been extensively studied in the recent literature \citep{Andrews:2009aa,Andrews:2010jk,Hirano:2012qv,Chernozhukov:2013fj,Fang:2014eu}. However, to our knowledge, its consequence in relation to simulation-based inference has not been explored. One of our goals here is to quantify the effects of simulation in the context of moment inequalities and provide a practical guidance for empirical studies.

\subsection{The effects of simulated variables}\label{ssec:num_ex1}
We start with a simple numerical experiment.
Slightly generalizing Example \ref{ex:twomean},  consider $J$ moment inequality restrictions on a scalar parameter $\theta$:
\begin{equation}
	\theta-E_P[1\{u_{i,j}<X_{i,j}\}]\leq0,\quad\forall j=1,2, \cdots, J.\label{eq:mc_J_mean}
\end{equation}
The goal of the experiment is to compare the performance of two types of confidence intervals for $\theta$: one that computes the moment above analytically and the other that approximates the moment by simulation.
Let $X_i\equiv(X_{i,1},\cdots,X_{i,J})'$ be generated as an i.i.d. random vector following a $J$-dimensional standard normal distribution.
For each $i$ and $r$, let $u_{i,r}\equiv(u_{i,1,r},\cdots,u_{i,J,r})'$ be generated as a  $J$-dimensional standard normal vector independent of $X_i$. For the simulation-based confidence region, we draw, for each $i$, a random sample $\{u_{i,r}\}_{r=1}^R$ of size $R$.

 For each $c\ge 0,$ define
\begin{align}
	\mathcal C_n^{\text{Ana}}(c)&\equiv\Big(-\infty,\min_{j=1,\cdots,J}\Big\{\frac{1}{n}\sum_{i=1}^{n}\Phi(X_{j,i})+c/\sqrt{n}\Big\}\Big]\label{eq:cs_ana}\\
	\mathcal C_{n,R}^{\text{Sim}}(c)&\equiv\Big(-\infty,\min_{j=1,\cdots,J}\Big\{\frac{1}{nR}\sum_{i=1}^{n}\sum_{r=1}^{R}1\{u_{j,i,r}<X_{j,i}\}+c/\sqrt{n}\Big\}\Big],\label{eq:cs_sim}
\end{align}
where $\Phi(\cdot)$ is the cumulative distribution function of a standard normal distribution.
The first confidence region $\mathcal C_n^{\text{Ana}}$ computes the moments  analytically, while  the second confidence region $\mathcal C_{n,R}^{\text{Sim}}$ computes them using  simulation. To investigate the effect of simulation on the test statistic only, we use a common critical value $c$ for both confidence intervals.  This critical value $c$  is calculated as the $1-\alpha$ quantile of the maximum  of $J$ independent normal random variables with mean $0$ and variance $Var(\Phi(X_i))$, which corresponds to the limiting distribution of $T_n(\theta)=\sqrt n\max_{j=1,\cdots,J}\{\theta-n^{-1}\sum_{i}\Phi(X_{j,i})\}$ under the least favorable configuration (i.e. $\theta=E[1\{u_j<X_j\}]=0$ for all $j$). Note that this critical value does not account for the fact that a finite number of draws is used in \eqref{eq:cs_sim}.

Below, we report the probabilities of the confidence intervals covering the upper bound $\theta^U$ of the identified set.\footnote{
For the one-sided confidence intervals in \eqref{eq:cs_ana}-\eqref{eq:cs_sim}, covering the upper bound $\theta^U$ of the identified set is the least favorable event for covering the identified set or covering each point in the identified set.}
Table~\ref{tab:CP1}  shows the coverage probabilities of the confidence intervals for a nominal level of $1-\alpha=0.95$. We report
simulation results based on sample size $n\in\{100,250, 1000 \}$, the number of simulation draws $R\in\{1,5,10,20\}$, and the number of moment inequalities $J \in \{2,5,10,30\}$. For each setting, we generate $1000$ Monte Carlo replications.
Here, the experiments are designed to investigate the performance of the confidence intervals when relatively small numbers of draws are used. However, note that the number of simulation draws in this range
is used in practice \citep[see e.g][]{CilibertoTamer2009}.

\begin{table}[htb!]
\caption{Coverage Probabilities of $\theta^{U}=0.5$ }
\label{tab:CP1}
	\centering
\begin{tabular}{lccccc}
\hline \hline
 & Analytical & \multicolumn{4}{c}{Simulated}\tabularnewline
 \cline{3-6}
  &  & $R=1$ & $R=5$ & $R=10$ & $R=20$\tabularnewline
\cline{1-6}

  \multicolumn{6}{l}{Panel A: ($J=2$)}\\
$n=100 $ & 0.948 & 0.734 & 0.904 & 0.919 & 0.933\tabularnewline
$n=250 $ & 0.954 & 0.733 & 0.916 & 0.916 & 0.945\tabularnewline
$n=1000$  & 0.949 & 0.738 & 0.906 & 0.925 & 0.933\tabularnewline
&&&&&\\
  \multicolumn{6}{l}{Panel B: ($J=5$)}\\
$n=100 $ & 0.940 & 0.583 & 0.882 & 0.908 & 0.930\tabularnewline
$n=250 $ & 0.952 & 0.604 & 0.881 & 0.920 & 0.935\tabularnewline
$n=1000$ & 0.948 & 0.666 & 0.894 & 0.913 & 0.933\tabularnewline
&&&&&\\
  \multicolumn{6}{l}{Panel C: ($J=10$)}\\
$n=100 $ & 0.931 & 0.481 & 0.853 & 0.904 & 0.920\tabularnewline
$n=250 $ & 0.939 & 0.467 & 0.868 & 0.908 & 0.925\tabularnewline
$n=1000$ & 0.937 & 0.487 & 0.853 & 0.896 & 0.919\tabularnewline
&&&&&\\
  \multicolumn{6}{l}{Panel D: ($J=30$)}\\
$n=100 $ & 0.939 & 0.245 & 0.811 & 0.888 & 0.923\tabularnewline
$n=250 $ & 0.935 & 0.266 & 0.803 & 0.878 & 0.917\tabularnewline
$n=1000$ & 0.946 & 0.235 & 0.810 & 0.881 & 0.912\tabularnewline
\hline \hline
\end{tabular}
\end{table}

The coverage probabilities of the confidence intervals depend on the number of simulation draws $R$ and the number of inequalities $J$ in non-trivial ways.
For any $n$ and $J$, reducing the number of draws $R$ lowers the coverage probability below the nominal level, resulting in
 a size distortion. This distortion is particularly severe when the number $J$ of inequality restrictions  is large.
    For example, consider the case with $J=30$ inequalities. This setting is relevant for  empirical examples that  involve moderate to many inequalities.
	In this case,  even for $n=1000$, the simulation based confidence intervals have coverage probabilities significantly below the nominal level: 0.235 ($R$=1), 0.810 ($R$=5), 0.881 ($R$=10), and 0.912 ($R$=20) respectively. The size distortion is particularly severe when only one simulation draw is used for each $X_i$.
	 The size distortions are not as severe as this case when the number of inequalities is relatively low (e.g. $J$=2 and 5). However, the coverage probabilities are still below the nominal level in all cases.\footnote{We note that the analytical confidence interval is also undersized when $J$ is large. This is due to the fact that the critical value is also calculated by a simulation based approximation.
 However, the magnitude of the distortion is limited (at most 2\%).}

This experiment shows that simulation errors can have nontrivial impacts on the finite sample performance of the confidence intervals. In particular, size distortions can be severe in models with moderate to many moment inequalities. 
Heuristically, the size distortion arises because the boundary of the confidence interval is an irregular transformation of the underlying moment functions.
When the analytical  moment $\Phi(X_{j,i})$ is replaced with the simulation counterpart $\frac{1}{R}\sum_{r=1}^{R}1\{u_{j,i,r}<X_{j,i}\}$,
an approximation error  (of order $O_p(R^{-1/2})$) remains.  To see the effect of this, write the right end point of $\mathcal C_{n,R}^{\text{Sim}}(c)$
as
\begin{align}
	\min_{j=1,\cdots,J}\Big\{\frac{1}{n}\sum_{i=1}^{n}(\Phi(X_{j,i})+r_{j,i})+c/\sqrt{n}\Big\},\label{eq:approx_error}
\end{align}
where $r_{j,i}\equiv\frac{1}{R}\sum_{r=1}^{R}1\{u_{j,i,r}<X_{j,i}\}-\Phi(X_{j,i})$. Taking the minimum introduces a downward bias to the estimated boundary. In other words, the right end point of the confidence interval gets pushed inward, reducing the coverage probability. This bias tends to be more severe when there are many binding moment inequalities with non-negligible approximation errors.  The naive critical value does not take this into account.  In finite samples, where the variation of $\frac{1}{n}\sum_{i=1}^n r_{j,i}$ is not negligible, ignoring the effect of the simulation error may therefore result in misleading inference.

\subsubsection*{Predicted variables}
Predicted variables have similar effects on inference. Slightly modifying Example \ref{ex:twomean}, consider the restrictions
\begin{align*}
\theta-E_P[F_j(X_{j,i})]\le 0,~j=1,\dots,J,
\end{align*}
where $F_j$ is an unknown function, which can be estimated separately. As discussed earlier, it is common in empirical practice to estimate some functions (such as the profit function in Example \ref{ex:product}) before conducting inference based on the moment inequalities. Let $N_1\in \mathbb N$ denote the number of observations used to estimate $F$ in the first stage and let $\hat F_{j,N_1}$ be the first-stage estimator of $F_j$. If $F_j$ is known up to a finite-dimensional parameter $\gamma\in \mathbb R^{d_\gamma}$, it can be estimated by a parametric first-stage estimator $\hat F_{j,N_1}(\cdot)=F_j(\cdot;\hat\gamma_{N_1})$. It can also be estimated nonparametrically or with simulation. Replacing $F_j(X_{j,i})$ with its prediction $\hat F_{j,N_1}(X_{j,i})$ introduces an approximation error $r_{j,i}=\hat F_{j,N_1}(X_{j,i})-F_j(X_{j,i})$, which often satisfies  $r_{j,i}=O_p(N_1^{-\eta})$ for some $0<\eta\le 1/2$.\footnote{The rate depends on the estimator and assumptions imposed on $F$. For parametric problems, it is common to have $\eta=1/2$, while $\eta<1/2$ is common for nonparametric problems.} Therefore, ignoring the variation of the first-stage error $r_{j,i}$ can have a consequence similar to the one discussed above.

The magnitude of the first-stage error is of order $N_1^{-\eta}$, which must be evaluated in context.  For instance, the total number of observations in the first stage may be large, but if the first stage uses location fixed effects, the relevant $N_1$ is the number of observations in each location, which may be quite small in some cases.  Note that in recognition that first-stage estimation error may be an issues, \cite{Holmes2011} and \cite{HoudeNewberrySeim2017} implement a procedure similar to the first procedure we discuss (in Section~\ref{ssec:cv_correction}).

\subsubsection*{Comparison to MSM}
We note that the irregularity mentioned above does not arise in classical  moment equality models. For comparison purposes, we briefly discuss this point.
Suppose that $\theta=\theta_0$ is the unique solution to the moment equality restrictions:
\begin{align}
	E_P[m_j(X_i,\theta)]=0,~j=1,\cdots,J
\end{align}
for $J\ge d_\theta$. A  method of simulated moments (MSM) estimator $\hat\theta_{n,R}$ is defined as
\begin{align}
	\hat\theta_{n,R}=\text{argmin}_{\theta\in\Theta}\hat E_n[\hat m_R(X_i,\theta)]'\hat \Sigma_{n,R}(\theta)^{-1}\hat E_n[\hat m_R(X_i,\theta)],
\end{align}
where $\hat m_R(X_i,\theta)=(\hat m_{1,R}(X_i,\theta),\cdots,\hat m_{J,R}(X_i,\theta))'$.
Confidence regions  can be constructed around $\hat\theta_n$.
Under regularity conditions that ensure asymptotic normality, the MSM estimator depends on the sample moments in a regular manner.

Let $D_0=\nabla_\theta E_P[m(X_i,\theta_0)]$ and $\Omega_0=E_P[m(X_i,\theta_0)m(X_i,\theta_0)']$. It is well-known that, under regularity conditions,  the MSM estimator is asymptotically linear in the sense that
\begin{align}
	\hat\theta_{n,R}=\theta_0+\frac{1}{nR}\sum_{i=1}^n\sum_{r=1}^R&l(X_i,u_{i,r},\theta_0)+o_p(n^{-1/2}),\notag\\
	&l(x,u,\theta_0)=(D_0'\Omega_0D_0)^{-1}D_0'\Omega_0M(x,u,\theta_0),
\end{align}
where the \emph{influence function} $l(x,u,\theta_0)$ has zero mean and measures the (first-order) effect of each observation $(x,u)=(X_i,u_{i,r})$ on the variation of the estimator and hence determines the asymptotic variance of $\hat\theta_{n,R}$ \citep[see e.g][]{Newey:1994aa,GourierouxMontfort1996}. In sum, the first-order effect of simulation enters only the asymptotic variance of the estimator.

To see the effect of simulation on confidence intervals, consider a slight modification of Example \ref{ex:twomean} where $\theta_0$ solves
\begin{align}
	\theta_0-E_P[1\{u_j< X_j\}]= 0,~j=1,2.
\end{align}
It is straightforward to show that an MSM estimator with equal weights on the moment conditions is given by $\hat\theta_{n,R}=(\bar{m}_{1,n,R}+\bar{m}_{2,n,R})/2$. A one-sided confidence interval on $\theta_0$ can be constructed as
\begin{align}
\mathcal C^{\text{MSM}}_{n,R}(c)\equiv(-\infty,\hat\theta_{n,R}+c/\sqrt n]=\Big(-\infty,\frac{\bar{m}_{1,n,R}+\bar{m}_{2,n,R}}{2}+c/\sqrt n\Big],
\end{align}
where $c$ is the $1-\alpha$ quantile of the asymptotic (normal) distribution of the MSM estimator. The form of the confidence interval
shows that (i) the boundary of the confidence interval depends on the sample moments in a smooth manner; and hence (ii) simulation errors in the MSM estimator affects $\mathcal C^{\text{MSM}}_{n,R}$, but it can easily be accounted for by adjusting $c.$ That is, the increased variance of the MSM estimator can be accommodated using a suitable estimator of the asymptotic variance that accounts for $R$ being finite.
For moment inequalities, however, it turns out that this type of variance correction is not enough. As we show in Section \ref{sec:size_correction}, one also needs to account for a potential bias in the estimated boundary.

\subsubsection*{The effects of simulation and a common empirical feature}\label{ssec:num_ex2}
Before proceeding further, we illustrate a common feature of empirical examples that can potentially cause serious size distortions if simulation is used naively. The distortion can be particularly severe when the number of draws is small. To highlight this,
 we design a data generating process based on the entry game in Example \ref{ex:entry}.

Let $Z_{i}=(Z_{1,i},Z_{2,i})$ collect the observable characteristics of the two firms. We let $Z_i$ be generated as a discrete random vector supported on a finite set $\mathcal Z=\{z_k,k=1,\cdots,K\}$. Table~\ref{tab:PZ} gives the distribution and support of $Z_i$. In empirical studies, it is a common practice to classify continuous state variables into a finite number of bins. When such discretization is used, some bins may contain a limited number of observations. The distribution in Table \ref{tab:PZ} emulates this feature by assigning low probabilities to some bins.

\begin{table}[htbp]
\caption{Probability Distribution of $Z$}
\label{tab:PZ}
	\centering
\begin{tabular}{rrrrrr}
\hline
$z_{1}$ & $-0.1$ & $-0.5$ & $0$ & $0.5$ & $1$\tabularnewline
\cline{2-6}
$P(Z_{1}=z_1)$ & $0.1$ & $0.1$ & $0.1$ & $0.1$ & $0.6$\tabularnewline
\hline
 &&&&&\\
\cline{1-4}
$z_{2}$ & $-0.5$ & $0$ & $0.5$ &  & \tabularnewline
\cline{2-4}
$P(Z_{2}=z_2)$ & $0.1$ & $0.8$ & $0.1$ &  & \tabularnewline
\cline{1-4}
\end{tabular}

\vspace{0.1in}
Note: $Z_1$ and $Z_2$ are independent.
\end{table}

We generate unobservable characteristics $u_i=(u_{1,i},u_{2,i})$  as a bivariate standard normal vector independent of $Z_{i}$. For simplicity, we assume symmetry between the firms $(\beta=\beta_j,\Delta=\Delta_j,j=1,2)$ and set $\theta\equiv(\beta,\Delta)=(0.9,-0.5)$.
For some values of $u_i$, the model predicts both $Y_i=(1,0)$ and $(0,1)$ as multiple equilibria.  If this is the case, we
select the equilibrium $Y_i=(1,0)$ with probability $0.7$ (independent of $(Z_i,u_i)$).
The knowledge on the
selection mechanism is not used for inference, and hence the agnosticism
about the equilibrium selection rule leads to partial identification
of parameters. Figure~\ref{fig:Identifiedset} shows the identified set $\Theta_I$ for $\theta$ based only on the moment inequalities \eqref{eq:entry3}-\eqref{eq:entry4}.\footnote{To focus on the non-standard effects through the moment inequalities, we drop the moment equality restrictions in this exercise. Due to the presence of covariates with 15 support points,  there are a total of $30$ unconditional moment inequalities.} In what follows, we report coverage probabilities on one of the extreme points $\theta^{U}=(0.8880,-0.4015)$ of $\Theta_I$, which gives the
upper bound on the competitive effect parameter $\Delta$.

\begin{figure}[htbp]
\centering
\caption{Identified Set: Entry Game}
\label{fig:Identifiedset}
\includegraphics[scale=0.6]{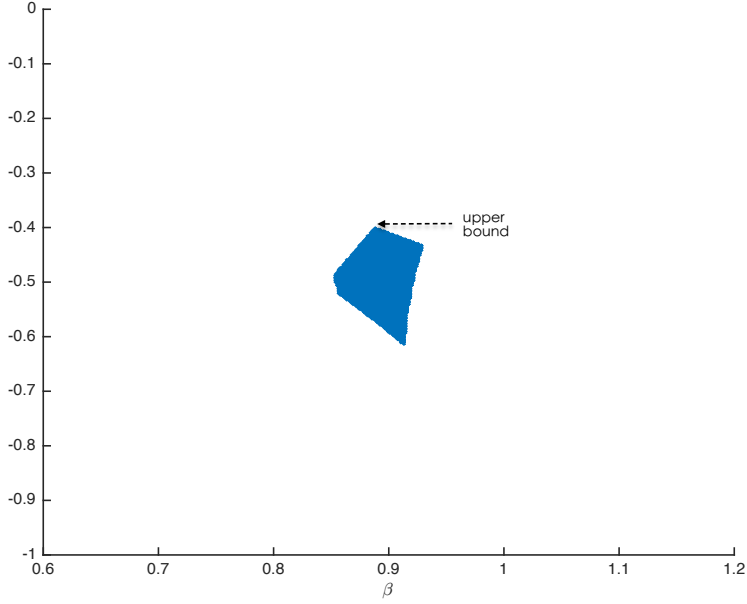}
\end{figure}

Using the specification above and instrument functions  $1\{Z_i=z_k\},k=1,\cdots,K$, we  transform the conditional moment inequalities in \eqref{eq:entry3}-\eqref{eq:entry4} into the following
unconditional moment inequalities:
\begin{align}
E\big[\big(1\{Y_i=(0,1)\} - H_1(Z_{i};\theta)\big)1\{Z_i=z_k\}\big]&\le 0\\
E\big[\big(H_2(Z_{i};\theta)  -1\{Y_i=(0,1)\}\big)1\{Z_i=z_k\}\big]&\le 0,\notag
\end{align}
where the entry probabilities
\begin{align}
	H_1(Z_{i};\theta)&\equiv P(u_1\le -Z_{1,i}'\beta_1-\Delta_1,~u_2> -Z_{2,i}'\beta_2|Z_i)\\
H_2(Z_{i};\theta)&\equiv	P(u_1\le -Z_{1,i}'\beta_1-\Delta_1,u_2>-Z_{2,i}'\beta_2-\Delta_2|Z_i)\notag\\
		&\qquad\qquad+P(u_1\le -Z_{1,i}'\beta_1,-Z_{2,i}'\beta_2\le u_2\le -Z_{2,i}'\beta_2-\Delta_2|Z_i)
\end{align}
are calculated using either an analytical expression or a frequency simulator.
For example, using the parametric specification, $H_{1}(Z_i,\theta)$ may be computed analytically as  $\Phi(-Z_{1,i}'\beta)\times \Phi(Z_{2,i}'\beta)$.  Alternatively, using a simulator one may compute the same object as $ R^{-1}\sum_{r=1}^R1\{u_{1,i,r}\le -Z_{1,i}\beta,u_{2,i,r}>-Z_{2,i}\beta\}$, where $(u_{1,i,r},u_{2,i,r}),r=1,\cdots,R$ are drawn from the bivariate standard normal distribution.

Our benchmark inference procedure is implemented as follows. The confidence
region takes the form: $CS_{n}=\{\theta\in\Theta: T_{n,R}(\theta)\le c_{n}(\theta)\}$,
where $T_{n,R}$ is the statistic proposed by \cite{Rosen:2008aa}
and further refined by \cite{AndrewsJia2012}:
\begin{eqnarray*}
T_{n,R}(\theta)=\inf_{t\in \mathbb R_{+}^{J}}(\sqrt n\bar{m}_{n,R}(\theta)-t)'\tilde{\Sigma}_{n,R}^{-1}(\theta)(\sqrt n\bar{m}_{n,R}(\theta)-t),
\end{eqnarray*}
where $\tilde \Sigma_{n,R}$ is a suitable estimator of the asymptotic variance of the moments. The  critical value $c_{n}(\theta)$ is computed using a bootstrap procedure combined with the generalized moment selection (GMS) procedure  \citep{Andrews:2010jk,AndrewsJia2012}.\footnote{We use $\tilde{\Sigma}_{n}(\theta)=\hat{\Sigma}_{n}(\theta)+\max\{0.012-\text{det}(\hat{\Omega}_{n}(\theta)),0\}\hat{D}_{n}(\theta)$, where $\hat{D}_{n}(\theta)=Diag(\hat{\Sigma}_{n}(\theta))$ and $\hat{\Omega}_{n}(\theta)=\hat{D}_{n}^{-1/2}(\theta)\hat{\Sigma}_{n}(\theta)\hat{D}_{n}^{-1/2}(\theta)$.
We set $\kappa_{n}=n^{1/16}$ for the GMS parameter.
}
For  details on the GMS procedure, we refer to the references above, but we briefly describe its mechanism to highlight the  potential effects of simulation on this procedure. The key idea of the GMS is to compute the
critical value by selecting the moments that are relevant to the asymptotic
null distribution of the test. For example, in an implementation of the GMS, the $j$-th moment inequality is ``selected''
and used to calculate $c_n$ if the studentized moment is smaller than a tuning parameter $\kappa_{n}$, i.e. $\frac{\sqrt n\bar{m}_{j,n,R}(\theta)}{\hat{\sigma}_{j,n,R}(\theta)}\le \kappa_{n}$,
where $\hat{\sigma}_{j,n,R}^{2}(\theta)$ is the $j$-th diagonal element of
$\hat{\Sigma}_{n,R}(\theta)$. The critical
value is then computed as the $1-\alpha$ quantile of the bootstrapped
statistic where the sample moments are replaced with the bootstrap analog of the selected moments. We note here that the simulated moments could also potentially affect the GMS step, but its consequence is not immediately clear.

\begin{table}[htbp]
\centering
\caption{Coverage Probabilities for $\theta^U$}
\label{tab:Coverage}
\begin{tabular}{lccccc}
\hline\hline
   & Analytical & \multicolumn{4}{c}{Simulated}\tabularnewline
\cline{3-6}
   &  & $R=1$ & $R=5$ & $R=10$ & $R=20$\tabularnewline
\hline
Panel A: ($n=250$) & & & & & \\
  Coverage prob & $0.921$ & $0.391$ & $0.453$ & $0.449$ & $0.454$\tabularnewline

  $\#$ of times differ in selection &  & $997$ & $858$ & $755$ & $666$\tabularnewline
&&&&&\\
Panel B: ($n=500$) & & & & & \\
  Coverage prob & $0.945$ & $0.814$ & $0.903$ & $0.906$ & $0.909$\tabularnewline

 $\#$ of times differ in selection &  & $979$ & $766$ & $587$ & $448$\tabularnewline

&&&&&\\
Panel C: ($n=1000$) & & & & & \\
 Coverage prob & $0.961$ & $0.833$ & $0.947$ & $0.952$ & $0.960$\tabularnewline

 $\#$ of times differ in selection &  & $982$ & $800$ & $638$ & $475$\tabularnewline

&&&&&\\
Panel D: ($n=2000$) & & & & & \\
  Coverage prob & $0.953$ & $0.756$ & $0.935$ & $0.941$ & $0.949$\tabularnewline

 $\#$ of times differ in selection &  & $990$ & $831$ & $709$ & $539$\tabularnewline
\hline \hline
\end{tabular}
\end{table}

Table~\ref{tab:Coverage} reports simulation results based on sample size $n\in\{250,500, 1000,2000\}$
and the number of simulation draws $R=\{1,5,10,20\}$. We simulate $1000$
datasets for each setting. The coverage probabilities of the confidence regions with nominal level  $95\%$ are evaluated for
the true upper bound  $\theta^{U}$.

The main finding is that, for small sample size and simulation size, the coverage probabilities  of the confidence region are distorted  in a significant way. In particular,
with $n=250$, the coverage probabilities of the simulation-based confidence regions vary from 39.1\% to 45.3\%, significantly below the nominal level.
Even for a moderate sample size, these confidence regions
exhibit size distortions when the number of draws $R$ is small. For example,
for $n=2000$, the coverage probabilities  of the simulated-based confidence region is
75.6\% when only a single draw is used. However, this gets improved (to 93.5\%) with $R=5$ draws. Table \ref{tab:Coverage} suggests that, with $R=20$, the coverage probabilities of simulation based confidence regions are close to those of of the analytical ones in relatively large samples ($n=1000,2000$).

This experiment suggests that the test statistic may not be able to benefit from averaging of the simulation errors when the number of observations in each bin  is small.
Combined with the irregular nature of the test statistic,  this can result in a severe size distortion. The presence of bins with small numbers of observations is common in empirical settings. In such cases, care must be taken.\footnote{We conjecture that the same comment applies to conditional moment inequality models if the  essential sample size is small (i.e. small $n$ and bandwidth $h^{d_Z}$).}
One possibility may be to increase the number of simulation draws for such bins. However, the choice of the number of draws becomes an arbitrary component of inference. Another possibility is to modify the procedure to explicitly account for the effects of the simulation. In the next section, we explore several possibilities along this line.

As a final remark, we note that
the simulation not only affects the
value of the test statistic but also the general moment selection
procedure.
Table \ref{tab:Coverage} reports the number of times (out of 1000) the set of the selected moments differ between the simulated and analytical methods.  This difference arises because the simulation  adds
perturbation to the studentized moment. With a limited number of draws, this effect on the GMS may not be negligible.  As a result, some of the ``selected''
inequalities (i.e. $\sqrt{n}\bar{m}_{j,n}(\theta)/\hat{\sigma}_{j,n}(\theta)\ge -\kappa_{n}$) according to the GMS based on the analytical moment may be discarded (i.e. $\sqrt{n}\bar{m}_{j,n,R}(\theta)/\hat{\sigma}_{j,n,R}(\theta)<- \kappa_{n}$) under the GMS based on the simulated moment, and vice versa.

\section{Inference methods with corrections for simulation errors}
\label{sec:size_correction}
We consider two methods to account for the simulation errors. One is to correct the critical value. The other is to correct or ``regularize'' the test statistic. The former method closely follows the recent development on the moment inequality literature, and hence we keep its discussion minimal. The second method is a novel approach and has several attractive features for simulation based methods. It combines a simple bias correction method with a bootstrap critical value based on a regular (and differentiable) functional of a random vector that is asymptotically normal. To our knowledge, the uniform validity of such a method in the context of simulation-based inference is new to the literature.

\subsection{Critical value correction methods}\label{ssec:cv_correction}
Commonly used inference methods provide approximations to the distribution of the test statistic in \eqref{eq:test_stat}. Many of these methods are based on resampling techniques.
As we saw in the previous section, ignoring the variation due to simulation can result in poor inference. Recall that, in the numerical experiment in Section \ref{ssec:num_ex1}, the size distortion occurred because
the naive critical value did not take into account the bias that was due to the extra variation from the approximation error $\frac{1}{n}\sum_{i=1}^n r_{j,i}$ (in \eqref{eq:approx_error}).
A straightforward way to correct this effect is to adjust the critical value.
One can achieve this by a bootstrap critical value that resamples the simulated variables $u_{i,r}$ along with the original sample $X_i$ across bootstrap replications.
Specifically, let $\{X^*_i,i=1,\dots,n\}$ be a bootstrap sample drawn with replacement from the empirical distribution. For each $i$,  let $\{u^*_{i,r},r=1,\dots,R\}$ be drawn from $P(\cdot|X^*_i)$.

We then let $\hat m^*_{R}(W^*_i,\theta)=R^{-1}\sum_{r=1}^RM(X^*_i,u^*_{i,r},\theta)$ be the simulation approximation to the (conditional) moment function in the $i$-th bootstrap sample.
Define
\begin{align}
	\bar m^*_{n,R}(\theta)\equiv \frac{1}{n}\sum_{i=1}^n \hat m^*_{R}(W^*_i,\theta).
\end{align}
This resampled moment is constructed so that it mimics the behavior of $\bar m_{n,R}$ including the variation due to the simulated variables.
 The final step is to use the resampled moment above when one applies an existing method for computing the critical value.\footnote{Due to the non-standard nature of $S$, consistently approximating the limiting distribution of $T_{n,R}$ uniformly over a large class of DGPs is not possible in general. Hence, the existing methods resort to conservative distortions.} For example, consider the generalized moment selection  (GMS) in \cite{Andrews:2010jk}.
 Let $\hat D_{n,R}(\theta)=\text{diag}(\hat\Sigma_{n,R}(\theta))$ and let $\hat \Omega_{n,R}(\theta)$ be an estimator of the correlation matrix, i.e. $\hat \Omega_{n,R}(\theta)=\hat D_{n,R}(\theta)^{-1/2}\hat\Sigma_{n,R}(\theta)\hat D_{n,R}(\theta)^{-1/2}.$  This procedure uses the $1-\alpha$  quantile $c_{\kappa,n,R}(\theta)$ of the following statistic
\begin{align}
	T^*_{n,R}(\theta)=S\big(\hat D_{n,R}(\theta)^{-1/2}Z^*_{n,R}(\theta)+\varphi(\xi_{n,R}(\theta)),\hat \Omega_{n,R}(\theta)\big),\label{eq:test_stat}
\end{align}
where $Z^*_{n,R}(\theta)=\sqrt n(\bar m^*_{n,R}(\theta)-\bar m_{n,R}(\theta))$ is the bootstrapped empirical process,
and  $\xi_{n,R}$ is a $J\times 1$ vector whose components are
\begin{align}
	\xi_{j,n,R}(\theta)\equiv \kappa_n^{-1}\sqrt n\frac{\bar m_{j,n,R}(\theta)}{\hat\sigma_{j,n,R}(\theta)},~j=1,\dots,J.\label{eq:xis}
\end{align}
The generalized moment selection (GMS) function  $\varphi$ then selects the inequalities
that are relevant for the inference for $\theta$ based on $\xi_{n,R}$ (see \cite{Andrews:2010jk} for details).
The resulting confidence region is
\begin{align}
	\mathcal C_{n,R}^{\text{CV}}=\{\theta\in\Theta:T_{n,R}(\theta)\le c_{\kappa,n,R}(\theta)\}.\label{eq:conf_region_AS}
\end{align}

In Example \ref{ex:twomean}, applying this method with a $t$-test based GMS function ($\varphi(\xi_j)=0$ if $\xi_j\ge -1$ and $=-\infty$ otherwise) yields the following confidence interval
\begin{multline}
	\mathcal C_{n,R}^{\text{CV}}=\Big\{\theta\in\mathbb R:\sqrt n\max\Big\{\frac{\bar m_{1,n,R}(\theta)}{\hat\sigma_{1,n,R}},\frac{\bar m_{2,n,R}(\theta)}{\hat\sigma_{2,n,R}}\Big\}\le c_{\kappa,n,R}\Big\}\\
	=\Big(-\infty,\min_{j=1,2}\Big\{\frac{1}{nR}\sum_{i=1}^n\sum_{r=1}^R1\{u_{j,i,r}<X_{j,i}\}+c_{\kappa,n,R}\frac{\hat\sigma_{j,n,R}}{\sqrt n} \Big\}\Big],\label{eq:twomean_as}
\end{multline}
where the critical value $c_{\kappa,n,R}$ is the $1-\alpha$ quantile of the  bootstrapped statistic\footnote{The critical value and the standard deviations in this example are not indexed by $\theta$ because the re-centered moment functions $\bar m_{j,n,R}(\theta)-E[m_j(X_i,\theta)],j=1,\dots,J$ in the test statistic do not depend on $\theta$.}
\begin{align}
	T^*_{n,R}(\theta)&\equiv \max_{j\in\{j:\xi_{j,n,R}\ge -1\}}\frac{\sqrt n(\bar m^*_{j,n,R}(\theta)-\bar m_{j,n,R}(\theta))}{\hat\sigma_{j,n,R}}\notag\\
	&= \max_{j\in\{j:\xi_{j,n,R}\ge -1\}}\frac{\frac{1}{\sqrt nR}\sum_{i=1}^n\sum_{r=1}^R1\{u^*_{j,i,r}<X^*_{j,i}\}-1\{u_{j,i,r}<X_{j,i}\}}{\hat\sigma_{j,n,R}}.\label{eq:bootstat}
\end{align}
The confidence interval in this example is the intersection of the  sample (upper) bounds that are suitably expanded. The amount of the expansion  $c_{\kappa,n,R}\frac{\hat\sigma_{j,n,R}}{\sqrt n}$ accounts for the variation of each simulated moment. This type of confidence interval is also considered in the context of conditional moment restrictions  in \cite{Chernozhukov:2013fj}.

\begin{remark}\rm
The method described above accounts for the simulation variation contained in the sample moment $\bar m_{n,R}$. Hence, one can expect that it mitigates the problem we saw in Section \ref{ssec:num_ex2}. However, it does not account for the second channel through which simulation can affect the performance of the confidence region. This is through the rescaled moments $\xi_{j,n,R},j=1,\cdots, J$ used in the GMS function. As we saw in Table \ref{tab:Coverage}, the effect of simulation on the selected set of moments is nontrivial.
However, since the critical value may depend on $\xi_{n,R}$ in a discontinuos manner (see the $t$-test based on GMS above), the  analysis of the effect of the simulation error on the  coverage probability through this channel is complex.
We therefore do not pursue the correction of the effect of simulation through $\xi_{n,R}$. In contrast, our alternative inference method in the next section admits a straightforward way to correct the effect of simulation.
\end{remark}

\subsection{Regularization of test statistics}
In addition to the modification of the existing  methods, we propose a novel and computationally simple inference procedure that accounts for simulation. The key
 idea is as follows. Recall that
commonly used test statistics take the form: $T_{n,R}(\theta)=S(\sqrt n\bar m_{n,R}(\theta),\hat\Sigma_{n,R}(\theta))$ whose irregular behavior arises due to the non-smoothness of $S$.
Our procedure replaces $S$
by a smooth approximation $S_\mu$, which satisfies certain regularity conditions. This approach has several attractive features. First, the proposed method has a uniform  approximation property. That is, for any $(m,\Sigma)$, $|S_\mu(m,\Sigma)- S(m,\Sigma)|\le \beta\mu$  for a known uniform constant $\beta$ and the degree of smoothness $\mu>0$, which is chosen by the researcher.  Accounting for the approximation error  is then straightforward because $\beta$ is known. Second, the approximated test statistic $\tilde T_{n,R}(\theta)\equiv S_\mu(\sqrt n\bar m_{n,R}(\theta),\hat\Sigma_{n,R}(\theta))$ obeys standard limit theorems uniformly over a large class of DGPs and over a range of values for $\mu$. This in turn allows one to employ a standard resampling method such as bootstrap to calculate the critical value. Finally, smooth approximations to a wide class of non-smooth convex functions are available thanks to the recent developments in the non-smooth convex optimization literature \citep[see e.g.][]{Nesterov:2005qv,Beck:2012xy}.
Using these results, we provide functional forms of smooth approximations to  some of the commonly used test statistics. The idea of  regularizing  test statistics (or estimated bounds) also appears in related contexts \citep[][]{HaileTamer2003,Chernozhukov:2015aa,Kaido:2016aa,Masten:2017aa}. Our contribution here is to show its uniform validity in the context of inference with simulated variables.
 A potential price for this computationally simple method is the possibility of inference becoming conservative for some choice of the smoothing parameter. We will examine this point numerically in Section \ref{sec:montecarlo}.

Below, we illustrate our approach using Example \ref{ex:twomean}.
\setcounter{example}{0}
\begin{example}[Intersection bounds (continued)]\rm
	Recall the confidence interval in \eqref{eq:twomean3}:
\begin{align}
	\mathcal C_{n,R}^{\text{Sim}}
	&=\{\theta:S\big(\sqrt n (\bar m_{1,n,R}(\theta),\bar m_{2,n,R}(\theta))',\hat\Sigma_{n,R}(\theta)\big)\le c_{n,R}\},\label{eq:two_mean4}
\end{align}	
where $S((m_1,m_2)',\Sigma)=\max\{m_1,m_2\}$.
Consider replacing $S$ with a smooth approximation. Specifically, for $\mu>0$, define
\begin{equation}
	\tilde{\mathcal C}_{n,R}
=\{\theta:S_\mu\big(\sqrt n (\bar m_{1,n,R}(\theta), \bar m_{2,n,R}(\theta)),\hat\Sigma_{n,R}(\theta)\big)\le \tilde c_{\mu,n,R}\},\label{eq:two_mean5}
\end{equation}
where $S_\mu((m_1,m_2)',\Sigma)\equiv \mu\ln(\exp(\frac{m_1}{\mu})+\exp(\frac{m_2}{\mu}))$ replaces the maximum function.
We then calculate our critical value $\tilde c_{\mu,n,R}$  by
\begin{align}
	\tilde c_{\mu,n,R}=c_{1-\alpha,R}+\sqrt n\mu\ln 2.\label{eq:two_mean6}
\end{align}
This critical value consists of two terms.
The first term,
 $c_{1-\alpha,R}$, is an approximation to the $1-\alpha$ quantile of the root (centered and rescaled statistic):\footnote{In this example, $Z_n$ does depend on $\theta$. Hence, its $1-\alpha$ quantile does not depend on $\theta$ either.}
 \begin{align}
	Z_n=\sqrt n\big(S_\mu((\bar m_{1,n,R}(\theta),\bar m_{2,n,R}(\theta))',\hat \Sigma_{n,R}(\theta))-S_\mu((E_P[m_1(X_i,\theta)],E_P[ m_{2}(X_i,\theta)])',\Sigma_P(\theta))\big).\label{eq:zndef}
 \end{align}
 The second term in \eqref{eq:two_mean6} is a bias-correction term that accounts for the approximation error (or population-level bias) that arises when we replace $S(E_P[m(X_i,\theta)],\Sigma_P(\theta))$ by its smooth counterpart $S_\mu(E_P[m(X_i,\theta)],\Sigma_P(\theta))$.
The quantile $c_{1-\alpha,R}$ can be approximated by a bootstrap procedure that resamples an analog of $Z_n$ in \eqref{eq:zndef} (see Algorithm 1 below).

The confidence interval in \eqref{eq:two_mean5} is asymptotically valid over a wide class of data generating processes and choices of $\mu$.
We sketch the argument  below and defer the formal proof to the sequel.

Let $\theta=\min\{E_P[\Phi(X_{1,i})],E_P[\Phi(X_{2,i})]\}$ be the upper boundary point of the identified set.
Then,  $S(E_P[m(X_i,\theta)],\Sigma_P(\theta))=0.$
Let $0<\underline{M}<\overline M<\infty.$
Then, uniformly in $\mu\in [\underline{M},\overline M]$ and in $P$ over a class of distributions specified below,
the (least favorable) coverage probability is
\begin{align}
	P&(\theta \in \tilde{\mathcal C}_{n,R})\notag\\
	&=P\big(S_\mu\big(\sqrt n (\bar m_{1,n,R}(\theta), \bar m_{2,n,R}(\theta)),\hat\Sigma_{n,R}(\theta)\big)\le \tilde c_{\mu,n,R})\\
	&=P\big(Z_n-\sqrt n(S_\mu(E_P[m(X_i,\theta)],\Sigma_P(\theta))-S(E_P[m(X_i,\theta)],\Sigma_P(\theta)))\le \tilde c_{\mu,n,R}\big)\notag\\
	&\ge P(Z_n+  \sqrt n\mu\ln 2\le \tilde c_{\mu,n,R})\notag\\
	&=P(Z_n\le c_{1-\alpha,R})\to 1-\alpha,
\end{align}
where we used  \eqref{eq:zndef}, $|S_\mu(m)-S(m)|\le \beta\mu$ with $\beta=\ln 2$ (see Table \ref{tab:musmooth}). The convergence in the last step follows from the  argument below.

Note that $S_\mu$ is differentiable with the following derivative:
\begin{align}
	DS_\mu[m](h)= \sum_{j=1}^2w_{j}(m,\mu)h_j,~w_j(m,\mu)=\frac{e^{m_j/\mu}}{\sum_{j=1}^2e^{m_j/\mu}}.
\end{align}
It can be shown that $DS_\mu$ is Lipschitz continuous in $m$ with a Lipschitz constant $\mu^{-1}.$
Hence, by the mean-value theorem,
\begin{align}
	Z_n&= DS_\mu[\bar m_{n,R}^*](\sqrt n(\bar m_{n,R}-E_P[m(X_i)]))\\
	&=DS_\mu[E_P[m(X_i)]]\sqrt n(\bar m_{n,R}-E_P[m(X_i)])+\frac{1}{\mu}r_n,
\end{align}
where $r_n=o_P(1)$ uniformly in $P$ as $n\to\infty$ with $R$ fixed.
Hence, $Z_n$ converges in distribution to some limit $Z$ uniformly in $\mu$ over a compact set (not containing 0).

Now, let us compare this to a method without any regularization of $S$.
 The coverage probability of the confidence interval in \eqref{eq:two_mean4} is
\begin{align}
	P(\theta \in \mathcal C_{n,R}^{\text{Sim}})&=P\big(\sqrt n(S(\bar m_{n,R}(\theta),\hat \Sigma_{n,R}(\theta))-S(E_P[m(X_i,\theta)],\Sigma_P(\theta)))\le  c_{n,R} \big),
\end{align}
 The transformation $S$ is not differentiable but can be shown to be directionally differentiable with the following directional derivative:
\begin{align}
	DS[m](h)=\min_{j\in \mathcal J^*(m)}h_j,~\mathcal J^*(m)=\{j:m_j=\min\{m_1,m_2\}\}.
\end{align}
Observe that the directional derivative is non-linear in $h$ (but only positively homogeneous). More importantly, the directional derivative
depends on $m$ and hence the underlying data generating process in a discontinuous manner.  This is because the set of ``active'' inequalities, $\mathcal J^*(m)$, depends on $m$ discontinuously.
This in turn implies that the limiting distribution of the test statistic is discontinuous in the underlying DGP.\footnote{This follows from a  $\delta$-method for directionally differentiable functions \citep{Shapiro:1991aa}. The discontinuity in the limiting distribution of the statistic can  be shown without using the directional derivative \citep{Andrews:2010jk}. We use the directional derivative to make a comparison to the method based on the $\mu$-smooth approximation and standard $\delta$-method.} Hence, a small perturbation of the underlying data generating process may result in a significant change in the distribution of the statistic.
When simulation  is used to replace the population moments, this therefore could affect the behavior of the statistic in non-trivial ways.
This feature  motivated the vast literature on moment inequalities that corrects the critical value whenever this type of discontinuity is a concern \citep[see e.g][]{Andrews:2010jk,Chernozhukov:2013fj,Fang:2014eu,Romano:2014aa}.

In contrast, our approach first replaces the $S(\bar m_{n,R}(\theta),\hat\Sigma_{n,R}(\theta))$ by a smooth approximation $S_\mu(\bar m_{n,R}(\theta),\hat\Sigma_{n,R}(\theta))$, while correcting for the population level bias.
Since the limiting distribution of $S_\mu(\bar m_{n,R}(\theta),\hat\Sigma_{n,R}(\theta))$ depends on the underlying distribution in a smooth manner, there is no need to  correct the critical value $c_{1-\alpha,R}$.
In short, our approach regularizes the behavior of the statistic, while the vast literature regularizes the critical value to ensure the uniform validity of inference.

The motivations for this approach are two-fold. First, it is straightforward to show that the proposed method is
uniformly valid under the large $n$ asymptotics with a fixed simulation size $R$. The proposed method  combines the standard $\delta$-method with
 a bias correction (at the population level). The method and its uniform validity may be of independent interest outside the context of simulation based inference.
Second, the inference method is simple and can be implemented by a standard bootstrap procedure.
 This is attractive as accounting for
the effects of simulation errors on moment selection procedures or Bonferroni-correction methods in the existing literature may be non-trivial.
\end{example}

In general, we define our confidence set by
\begin{align}
	\tilde{\mathcal C}_{n,R}\equiv\{\theta:\tilde T_{n,R}(\theta)\le \tilde c_{n,R}(\theta)\},\label{eq:def_cs}
\end{align}
where the test statistic and the critical value are calculated as follows
\begin{align}
\tilde T_{n,R}(\theta)&\equiv S_\mu(\sqrt n\bar m_{n,R}(\theta),\hat\Sigma_{n,R}(\theta)),\label{eq:test_stat2}\\
\tilde c_{n,R}(\theta)&\equiv c_{n,R,1-\alpha}(\theta)+\sqrt n\mu\beta,\label{eq:crit_val}
\end{align}
where $c_{n,R,1-\alpha}$ is an estimate of the $1-\alpha$ quantile of
\begin{align}
Z_{\mu,n,R}=S_\mu(\sqrt n\bar m_{n,R}(\theta),\hat\Sigma_{n,R}(\theta))- S_\mu(\sqrt nE_P[m(X_i,\theta)],\Sigma(\theta)).\label{eq:def_zmunr}
\end{align}
Here, $S_\mu$ is an approximation to $S$, which has  smoothness properties that are useful for analyzing and correcting the behavior of the test statistic
in the presence of simulated variables. We introduce the following notion of approximation based on \cite{Beck:2012xy}.\footnote{The third condition in Definition \ref{def:musmooth} is not
required in \cite{Beck:2012xy} but is satisfied by all $\mu$-smooth approximations we use in this paper and is useful for establishing asymptotic results.}

\begin{definition}[$\mu$-smooth approximation]\label{def:musmooth}
Let $\phi:\mathbb R^J\to (-\infty,\infty]$ be a closed and proper convex function and let $M\subseteq \text{dom}(\phi)$	be a closed convex set.
A function $\phi_\mu:\mathbb R^J\to (-\infty,\infty)$ is said to be a \emph{$\mu$-smooth approximation} of $\phi$ with parameters $(\alpha,\beta,K)$ if the following conditions hold:

\noindent
(i) $\phi(m)-\beta_1 \mu\le \phi_\mu(m)\le \phi(m)+\beta_2\mu$ for some $\beta_1,\beta_2$ satisfying $\beta_1+\beta_2=\beta>0$;

\noindent
(ii) $\phi_\mu$ has a derivative $D \phi_\mu[m](\cdot)$ such that
\begin{align}
	\big\|D \phi_\mu[m]-D\phi_\mu[m']\big\|^*\le (K+\frac{\alpha}{\mu})\|m-m'\|,~\forall m,m'\in M,
\end{align}
for some $K\ge 0$ and $\alpha>0$;

\noindent
(iii) For each $m$, $(\mu,h)\mapsto D\phi_\mu[m](h)$ is continuous.
\end{definition}
Definition \ref{def:musmooth} (i) requires that $\phi_\mu$ has a uniform approximation property, which is the key condition for bias correction.
Definition \ref{def:musmooth} (ii) requires that $\phi_\mu$'s derivative is Lipschitz continuous, which plays a role in making the limiting distribution of $Z_{\mu,n,R}$   depend continuously on the underlying DGP.
Given this definition, we require $S$ and $S_\mu$ to satisfy the following condition. For this, let $\mathbb P^J$ be the set of $J\times J$ symmetric positive semi-definite matrices.
\begin{assumption}\label{as:onSmu}
	(i) The index function $S:\mathbb R^J\times \mathbb P^J\to\mathbb R_+$ satisfies Assumptions $1-6$ in \cite{Andrews:2010jk}. For any $a>0$ and $(m,\Sigma)\in \mathbb R^J\times \mathbb P^J$, $S(am,\Sigma)=a^{\chi}\phi(V^{-1/2}m)$ for a proper convex function $\phi:\mathbb R^J\to\mathbb R_+$ with $\chi=1$, where $V=\text{diag}(\Sigma)$; (ii) For each $\mu\in[\underline M,\overline M]$, the map $S_\mu:\mathbb R^J\times \mathbb P^J\to\mathbb R_+$ is  such that for any $a>0$ and $(m,\Sigma)\in \mathbb R^J\times \mathbb P^J$, $S_\mu(am,\Sigma)=a^\chi\phi_\mu(V^{-1/2}m)$ for a $\mu$-smooth approximation $\phi_\mu$ of $\phi$.
\end{assumption}
In what follows, we also call  $S_\mu$ the $\mu$-smooth approximation of $S$.
Table \ref{tab:musmooth} gives the index functions we consider and their $\mu$-smooth approximations with associated parameters.
Details are provided in Appendix A. These functions satisfy Assumption \ref{as:onSmu}.\footnote{It is also possible to consider  the index function  $S(m,\Sigma)=\inf_{t\in\mathbb R^J_{-}}(m-t)'\Sigma^{-1}(m-t)$ considered in \cite{Rosen:2008aa} whose $\mu$-smooth approximation is given by $S_\mu(m,\Sigma)=\max_{u\in \mathbb R^J_+}m'u-\frac{1+2\mu}{4}u'\Sigma u$ with $\chi=2$. However, the statistic based on this approximation requires a second-order $\delta$-method to obtain a valid limiting distribution, which makes the analysis more complicated. As such, we leave this possibility for future research.\label{fnt:S2}}

\renewcommand{\arraystretch}{1.4}
\begin{table}[htbp]
\begin{center}
	\caption{$\mu$-smooth approximations of commonly used index functions.}
	\label{tab:musmooth}
\begin{tabular}{lllll}
\hline\hline	
	&  $S(m,\Sigma)$ & $S_\mu(m,\Sigma)$ & $(\alpha,\beta,K)$ \\
	  \hline
(i)	&  $\sum_{j=1}^J[m_j/\sigma_j]_+ $ & $\mu\sum_{j=1}^J\ln(\exp(\frac{m_j}{\mu\sigma_j})+1)$ & $(J,J\ln 2,0)$ \\
(ii)	&  $\max_{j=1,\cdots,J}\{m_j/\sigma_j\}_+$& $\mu\ln(\sum_{j=1}^J\exp(\frac{m_j}{\mu\sigma_j})+1)$ &$(1,\ln (J+1),0)$ \\
	\hline\hline
\end{tabular}
\end{center}
\end{table}

In summary, we propose the following procedure to construct confidence regions.

\bigskip
\noindent
\textbf{Algorithm 1:}
\begin{description}
	\item[Step 1]: Choose $\mu>0$. Calculate $\tilde T_{n,R}(\theta)$ using a $\mu$-smooth approximation $S_\mu$ of $S$ in Table \ref{tab:musmooth} and simulated samples of size $R$. For each observation $X_{i'}$, also draw a larger simulated  sample   $\{u_{i',1},\cdots,u_{i',R_2}\}$ with $R \ll R_2$ from the law $P(\cdot|X_{i'})$ for $i'=1,\cdots,n$.
	\item[Step 2]: Bootstrap.
	\begin{itemize}
		\item Let $\{X^*_1,\cdots,X^*_n\}$ be drawn from the empirical distribution of $\{X_1,\cdots, X_n\}$ (with replacement).
		\item For each bootstrapped observation $X^*_{i}$, draw a simulated sample $\{u_{i,1},\cdots,u_{i,R}\}$ from the law $P(\cdot|X^*_i)$ for $i=1,\cdots,n$.
		\item Compute the $1-\alpha$ quantile $c_{n,R,1-\alpha}(\theta)$ of the root:
		\begin{align}
			Z^*_{\mu,n,R,R_2}=\sqrt n(S_\mu(\bar m^*_{n,R},\hat\Sigma^*_{n,R})-S_\mu(\bar m_{n,R_2},\hat \Sigma_{n,R_2})).
		\end{align}
	\end{itemize}

	\item[Step 3]: Calculate the critical value $\tilde c_{n,R}(\theta)$ in \eqref{eq:crit_val} by introducing the correction term $\sqrt n\beta\mu$ for the approximation bias.
	\item[Step 4]: For each  $\theta\in\Theta$, conduct steps 1-3 and report $\mathcal{\tilde C}_{n,R}=\{\theta\in\Theta:\tilde T_{n,R}(\theta)\le \tilde c_{n,R}(\theta)\}.$
\end{description}
Similar to the consistent estimation of the asymptotic covariance matrix in the MSM literature,  we use a  simulation sample of larger size in the algorithm  \citep[see e.g.][Section 2.3]{GourierouxMontfort1996}.  This is to re-center the bootstrapped root at an object that tends to the population counterpart and  mimic the behavior of the root in \eqref{eq:def_zmunr} by the bootstrap sample. While $R_2$ needs to be large for the asymptotic approximation to be valid, this simulation needs to be done only once.

Below, we establish the asymptotic validity of the inference procedure described above.
For each $j$, we let $\sigma^2_{P,j}(\theta)\equiv Var_P(\hat m_{j,R}(X_i,u_i,\theta))$ and $\Omega_P(\theta)\equiv Corr_P(\hat m_{j,R}(X_i,u_i,\theta))$, where $\hat m_{j,R}(X_i,u_i,\theta)=R^{-1}\sum_{r=1}^R M_j(X_i,u_{i,r},\theta)$. We then let $D S_\mu$ denote the gradient of the map $m\mapsto S_\mu(m,\Sigma).$ Let $\Psi$ be the set of  $J$-by-$J$ correlation matrices $\Omega$ with $\det(\Omega)>\epsilon$ for some $\epsilon>0$.
 We make the following assumption on the model.

\begin{assumption}\label{as:onP}
	The model $\mathcal F$ for $(\theta,P)$ satisfies the following conditions.
	
	\noindent
	(i) $\theta\in\Theta;$
	
	\noindent
	(ii) $E_P[m_j(X_i,\theta)]\le 0,j=1,\cdots, J$ for some $\theta\in\Theta;$
	
	\noindent
	(iii) $\{X_i,u_{i,1},\cdots,u_{i,R},i=1,\cdots,n\}$ is an i.i.d. sample from $P$;
	
	\noindent
	(iv) $\sigma_{P,j}(\theta)\in (0,\infty)$ for $j=1,\cdots,J$ and $\theta\in\Theta$;
	
	\noindent
	(v) $\Omega_P(\theta)\in \Psi$;
	
	\noindent
	 (vi) $E_P\Big[\Big|\frac{m_j(X_i,\theta)}{\sigma_{P,j}(\theta)}\Big|^{2+\delta}\Big]\le M$ for some $\delta>0$ and $0<M<\infty$;
	
	 \noindent
	 (vii) For any $\mu\in[\underline M,\overline M]$, $\|DS_\mu(E_P[m(X_i,\theta)],\Sigma_P(\theta))\|\ge \eta$ for some $\eta>0$ and for all $\theta\in\Theta$.
\end{assumption}

Assumption \ref{as:onP} (i)-(vi) are based on the conditions in \cite{Andrews:2009aa,Andrews:2010jk}.
In (iii), we assume an i.i.d. sample. With this assumption, the following estimator of the asymptotic covariance can be used:\footnote{While we establish validity of inference for i.i.d. samples, one could potentially relax this assumption and allow for strictly stationary and strongly mixing data by adopting an alternative estimator of the asymptotic covariance and modifying the bootstrap procedure properly.  See \cite{Andrews:2009aa} for the inference framework that allows such data.}
\begin{align}
	\hat\Sigma_{n,R}(\theta)=\frac{1}{n}\sum_{i=1}^n (\hat m_R(X_i,\theta)-\bar m_{n,R}(\theta))(\hat m_R(X_i,\theta)-\bar m_{n,R}(\theta))'.\label{eq:hatSigma}
\end{align}
We then let $\hat V_{n,R}(\theta)=\text{diag}(\hat\Sigma_{n,R}(\theta))$. For the purpose of stating an assumption, let us also define an infeasible estimator of $\Sigma_P$ as follows:
\begin{align*}
\hat\Sigma_n(\theta)=\frac{1}{n}\sum_{i=1}^n (m(X_i,\theta)-\bar m_{n}(\theta))(m(X_i,\theta)-\bar m_{n}(\theta))'.
\end{align*}
This estimator can be computed only if simulation is not required. We then let $\hat V_{n}(\theta)=\text{diag}(\hat\Sigma_{n}(\theta))$.

In Assumption \ref{as:onP} (v),  $\Psi$ contains all $J\times J$ correlation matrices whose determinant is bounded from below by $\epsilon>0$. This condition is required for one of the index functions used in \cite{Andrews:2010jk}. In our setting, we use this condition to ensure that the limiting distribution of $Z_{\mu,n,R}$ is continuously distributed. With additional notation, this assumption can be relaxed so that the lower bound on the determinant is required  only for the correlation matrix of a suitable subset of the moment functions.\footnote{See \cite{Kaido:2017aa} for generalization of the condition along this line.} Or it can be dropped entirely at the price of an additional tuning parameter to handle a potentially discontinuos limiting distribution.

Assumption \ref{as:onP} (vii) is an additional condition, which we add to the standard set of assumptions. It requires that the gradient of the smoothed index function $S_\mu$ does not vanish. This allows us to use the (first-order) $\delta$-method.  For the functions in Table \ref{tab:musmooth}, the condition is satisfied when $\theta\mapsto E_P[m_j(X_i,\theta)]$ is continuous and $\sigma_{P,j}(\theta),j=1,\cdots,J$ are uniformly bounded away from 0.

The following theorem ensures that the proposed confidence region controls the asymptotic confidence size uniformly over the parameter space $\mathcal F$.
\begin{theorem}\label{thm:size_control}
	Suppose Assumptions \ref{as:onSmu}-\ref{as:onP} hold. Let $0<\underline M<\overline M<\infty$.
	Let $R\in\mathbb N$ be fixed and let $\{R_2\}\subset \mathbb N$ be a sequence  such that $R_2\to \infty$ as $n\to\infty$ and $\hat V_{n,R_2}(\theta_n)^{-1/2}\bar m_{n,R_2}(\theta_n)-\hat V_{n}(\theta_n)^{-1/2}\bar m_{n}(\theta_n)=o_P(n^{-1/2})$ uniformly in $P$. Then,
\begin{align*}
\liminf_{n\to\infty}\inf_{\mu\in[\underline M,\overline M]}\inf_{(\theta,P)\in\mathcal F}P\Big(\theta\in\tilde{\mathcal C}_{n,R}\Big)\ge 1-\alpha.
\end{align*}	
\end{theorem}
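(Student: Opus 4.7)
The plan is to lower-bound the coverage by the probability that the smoothed root $Z_{\mu,n,R}$ lies below the bootstrap quantile, and then to show that this bootstrap probability tends to $1-\alpha$ uniformly via a standard $\delta$-method argument applied to the regular (smoothed) functional $\phi_\mu$.

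First, I would establish a bias-correction reduction. Fix $(\theta,P)\in\mathcal F$. The scaling property $S_\mu(am,\Sigma)=a\phi_\mu(V^{-1/2}m)$ in Assumption \ref{as:onSmu}, the uniform bound $\phi_\mu\le \phi+\beta_2\mu$ (with $\beta_2\le\beta$), and the fact that $\phi(V_P^{-1/2}E_P[m(X_i,\theta)])=0$ (which holds because $E_P[m(X_i,\theta)]\le 0$ and $\phi$ takes the max/sum-of-positive-parts form in Table \ref{tab:musmooth}) together give $S_\mu(\sqrt n E_P[m(X_i,\theta)],\Sigma_P(\theta))\le\sqrt n\beta\mu$. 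Combining this with the definition of $\tilde c_{n,R}(\theta)$ in \eqref{eq:crit_val} yields $P(\theta\in\tilde{\mathcal C}_{n,R})\ge P(Z_{\mu,n,R}(\theta)\le c_{n,R,1-\alpha}(\theta))$, reducing the problem to verifying the quantile property of the bootstrap critical value.

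Next, I would derive the limiting distributions of $Z_{\mu,n,R}$ and its bootstrap analog $Z^*_{\mu,n,R,R_2}$. Since $\phi_\mu$ has a Lipschitz gradient (Definition \ref{def:musmooth}(ii)), a mean-value expansion combined with a Lindeberg–Feller CLT under Assumption \ref{as:onP}(vi) and consistency of $\hat\Sigma_{n,R}$ for $\Sigma_P$ yield $Z_{\mu,n,R}\Rightarrow N(0,\sigma^2_{\mu,P})$ with $\sigma^2_{\mu,P}=D\phi_\mu[V_P^{-1/2}E_P[m]]\,\Omega_P\,D\phi_\mu[V_P^{-1/2}E_P[m]]'$. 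Assumptions \ref{as:onP}(v) and (vii) keep $\sigma^2_{\mu,P}$ uniformly bounded away from zero, so the limit is absolutely continuous at its $1-\alpha$ quantile. The same $\delta$-method applied to $Z^*_{\mu,n,R,R_2}$, combined with the bootstrap CLT for sample means, delivers the same conditional limit in probability; the gap between the bootstrap centering $\bar m_{n,R_2}$ and the infeasible $\bar m_n$ contributes only $o_P(1)$ after scaling by $\sqrt n$, thanks to the standing condition $\hat V_{n,R_2}^{-1/2}\bar m_{n,R_2}-\hat V_n^{-1/2}\bar m_n=o_P(n^{-1/2})$. Hence $c_{n,R,1-\alpha}(\theta)$ consistently estimates the $1-\alpha$ quantile of the limit, and $P(Z_{\mu,n,R}(\theta)\le c_{n,R,1-\alpha}(\theta))\to 1-\alpha$.

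Finally, I would promote this pointwise statement to uniform validity via a subsequencing argument over $(\theta,P,\mu)\in\mathcal F\times[\underline M,\overline M]$. Along any subsequence approaching the infimum, compactness of $[\underline M,\overline M]$, the continuity of $(\mu,h)\mapsto D\phi_\mu[m](h)$ in Definition \ref{def:musmooth}(iii), and uniform versions of the CLT and bootstrap consistency under the $2+\delta$ moment bound deliver convergence to a non-degenerate continuous Gaussian limit whose CDF is continuous at its $1-\alpha$ quantile. The main obstacle is precisely this uniformity step: the pointwise argument is routine, but making each piece uniform in $(P,\mu)$ requires a uniform CLT, uniform consistency of $\hat\Sigma_{n,R}$, a uniform bootstrap CLT, and uniform vanishing of the $R_2$-re-centering error. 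Assumption \ref{as:onP}(vii) is essential here, keeping the limit distribution continuous at its $1-\alpha$ quantile uniformly; without it the smoothing would fail to regularize the non-standard behavior of $S$.
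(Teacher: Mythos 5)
Your proposal is correct and follows essentially the same route as the paper: the bias-correction inequality reducing coverage to the event $Z_{\mu,n,R}\le c_{n,R,1-\alpha}$, a mean-value/delta-method expansion exploiting the Lipschitz gradient of $\phi_\mu$ together with a triangular-array CLT along drifting sequences $(\mu_n,\theta_n,P_n)$, bootstrap consistency with the $R_2$-recentering gap absorbed by the standing $o_P(n^{-1/2})$ condition, and a subsequence-extraction argument for uniformity. The only difference is organizational: the paper factors the argument through a generic high-level condition (Condition \ref{cond:high1} and Proposition \ref{prop:size}) and invokes Lemma 2 of \cite{Andrews:2009aa} and Theorem 1.2.1 of \cite{Politis:1999aa} for the uniform CLT and bootstrap steps you describe informally.
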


\section{Monte Carlo Experiments}\label{sec:montecarlo}
In this section, we show results of main Monte Carlo simulations to examine the performance of the methods that account for the simulation error.

\subsection{Performance of methods with correction for simulation errors}
Following our example on the intersection bounds in Section \ref{ssec:num_ex1},  we first use the method that corrects the critical value described in Section \ref{ssec:cv_correction}. We construct confidence intervals for $\theta $  with level $1-\alpha$ using:
\begin{align}
	\mathcal C_{n,R}^{\text{CV}}&\equiv\big(-\infty,~\min_{j}\{\bar{m}_{j,n,R}+ c_{\kappa,n,R}\frac{\hat{\sigma}_{j,n,R}}{\sqrt{n}}\big\}\big],\label{eq:clr_sim}
\end{align}
where $\bar{m}_{j,n,R}=(nR)^{-1}\sum_{i=1}^{n}\sum_{r=1}^{R}1\{u_{j,i,r}<X_{j,i}\},$ $\hat{\sigma}_{j,n,R}$ is the estimated standard deviation of the $j$-th simulated moment, and $c_{\kappa,n,R}$ is a critical value computed  as the $1-\alpha$ quantile of
\begin{align}
T^*_{n,R}= \max_{j\in\{j:\xi_{j,n,R}\ge -1\}}\sqrt{n}(\bar{m}_{j,n,R}^{*}-\bar{m}_{j,n,R})/\hat{\sigma}_{j,n,R}^{*},
 \end{align}
where  $(\bar{m}_{j,n,R}^{*},\hat{\sigma}_{j,n,R}^{*})$ is the bootstrap quantities corresponding to $(\bar{m}_{j,n,R},\hat{\sigma}_{j,n,R})$, and $\xi_{j,n,R}$ is as defined in \eqref{eq:xis} with $\kappa_n=\sqrt{\ln n}$.  When $T^*_{n,R}$ is computed, we redraw simulation draws  $\{u_{i,r},r=1,...R\}$ to account for simulation variations when simulating the bootstrap samples.

We also consider our  inference procedure based on the regularized statistic. Specifically, we use $\phi_{\mu}(\bar{m}_{j,n,R})=-\mu \ln\sum_{j=1}^{J}e^{-\bar{m}_{j,n,R}/\mu}$ to approximate $\phi(\bar{m}_{j,n,R})=\min_{j=1,..J}\bar{m}_{j,n,R}$ used in computing the confidence region in equation~\eqref{eq:cs_sim}.
We then construct  confidence intervals for $\theta$   with level $1-\alpha$ using:
\begin{align}
	\tilde{\mathcal C}_{n,R}&\equiv\big(-\infty,\phi_{\mu}(\bar{m}_{j,n,R})+\tilde{c}_{\mu,n,R}/\sqrt{n}\big],
\end{align}
where  $\tilde{c}_{n,R}$ is an estimated critical value in \eqref{eq:crit_val}.  We also redraw simulation draws  $\{u_{i,r},r=1,...R\}$ when simulating the bootstrap samples. In the experiments, we use $R_2=100$  to compute $(\bar m_{n,R_2},\hat\Sigma_{n,R_2})$ in the bootstrap step (Step 2 in Algorithm 1). For both confidence intervals, $S=1000$ Monte Carlo replications are generated.

Table~\ref{tab:CLR_bind} reports  the probabilities of the confidence intervals covering the upper bound $\theta^U$ of the identified set using the critical value correction. The coverage probabilities of the confidence intervals are all above the nominal level after correcting the critical value. This can be contrasted with the coverage probabilities of the confidence intervals without any correction.
The critical value correction method tends to make the confidence interval somewhat conservative when $R$ is small. In some cases, the coverage probability is very close to 1 and the corrected confidence interval is substantially longer than the one based on the regularization method as we discuss below.

Table~\ref{tab:mu24_bind} reports  the coverage probabilities of the confidence intervals based on  the $\mu$-smooth approximation. We set the smoothing parameter to $\mu=0.02$ and $0.04$ in our Monte Carlo experiments. The coverage probabilities of $\tilde{\mathcal C}_{n,R}$ are  all above those of the confidence intervals without correction and close to the nominal level in many cases. 
 These results indicate that  the  inference procedure with the $\mu$-smooth approximation is an effective method for correcting the size distortion caused by the finite number of draws.

 Tables~\ref{tab:length_CLR_bind} and \ref{tab:lengthmu24_bind} report  the median length of the  confidence intervals. The median length of a one-sided confidence interval  is computed as the difference between the median of the upper bound of the confidence interval and the right end point of the identified set. Table \ref{tab:length_CLR_bind}  shows that the confidence interval with the critical value correction is often much longer than
 the confidence interval without any correction, which is consistent with the robust size control property. The difference between the two shrinks as $R$ gets large.
Comparing the two tables, one can see that  the regularization based confidence intervals $\tilde{\mathcal C}_{n,R}$  are often significantly shorter than $\mathcal C_{n,R}^{\text{CV}}$.  A close inspection of the simulation results showed that this was because the regularized statistic  had a smaller variance that that of the non-regularized statistic, which led to shorter confidence intervals even after the bias correction.\footnote{This may be a generic feature of the $\mu$-smooth approximation method. We leave its general analysis for future work.}

The overall pattern remains the same even in the presence of locally slack constraints. Tables~\ref{tab:CLR_slack}-\ref{tab:lengthmu24_slack} report the coverage probabilities and excess length of the confidence intervals
when some of the constraints are slack. The slack constraints are introduced by shifting $X_j$'s mean by $1/\sqrt n$ for the first $J/5$ constraints.
One can see that both correction methods achieve valid coverage across all values of simulation draws. The regularization based confidence interval becomes slightly more conservative in terms of coverage probabilities compared to the case without the slack constraints. However, its length is still shorter than that of the critical value correction confidence interval across all cases.

In sum, the simulation results show that the regularization based confidence interval $\tilde{\mathcal C}_{n,R}$  works well both in terms of size  and length. Its size is controlled reasonably well even under some DGPs that make simulated confidence intervals without  correction severely undersized. The critical value correction method also achieves robust size control. However, it tends to be overly conservative when $R$ is small.

\section{Concluding remarks}
This paper explores the effects of simulated moments on the performance of
inference methods based on moment inequalities.  Due to the irregularity of the boundary
of the confidence regions, simulation errors can affect the performance of
inference in non-standard ways. This can result in a severe distortion especially when the number of inequality restrictions is large
and the essential sample size is small.
To account for the effect of the simulation error, we propose a novel way to construct confidence regions
using regularized statistics and establish an asymptotic size control result. 
The simulation results confirm the robust size control property of the proposed method.
An interesting avenue for future research  is on the choice of the smoothing parameter that accounts for the trades-off between the amount of bias correction and variance of the regularized statistic.

\clearpage
\begin{table}[h]
	\small
\centering
\caption{Coverage Probabilities for True Upper Bound: Critical Value Correction}
\label{tab:CLR_bind}
\begin{tabular}{lccccccccc}
\hline\hline
 & \multicolumn{4}{c}{Simulated (No correction)} && \multicolumn{4}{c}{CV correction: $\mathcal C_{n,R}^{\text{CV}}$}\tabularnewline
\cline{2-5}
\cline{7-10}
 & $R=1$ & $R=5$ & $R=10$ & $R=20$ && $R=1$ & $R=5$ & $R=10$ & $R=20$\tabularnewline
\hline
\multicolumn{1}{l}{A: ($J=5$)} & & & & & & & & \\
$n=100 $  & 0.583 & 0.882 & 0.908 & 0.930 && $0.998$ & $0.977$ & $0.959$ & $0.954$\tabularnewline
$n=250 $  & 0.604 & 0.881 & 0.920 & 0.935 && $0.998$ & $0.969$ & $0.968$ & $0.965$\tabularnewline
$n=1000$  & 0.666 & 0.894 & 0.913 & 0.933 && $0.997$ & $0.983$ & $0.968$ & $0.957$\tabularnewline
 & & & & & & & & & \\
\multicolumn{1}{l}{B: ($J=10$)}  & & & & & & & & \\
$n=100 $  & 0.481 & 0.853 & 0.904 & 0.920 && $0.996$ & $0.972$ & $0.955$ & $0.948$\tabularnewline
$n=250 $  & 0.467 & 0.868 & 0.908 & 0.925 && $0.998$ & $0.980$ & $0.964$ & $0.957$\tabularnewline
$n=1000$  & 0.487 & 0.853 & 0.896 & 0.919 && $1.000$ & $0.983$ & $0.963$ & $0.951$\tabularnewline
 & & & & & & & & & \\
\multicolumn{1}{l}{C: ($J=30$)}  & & & & & & & & \\
$n=100 $  & 0.245 & 0.811 & 0.888 & 0.923 && $0.996$ & $0.982$ & $0.971$ & $0.960$\tabularnewline
$n=250 $  & 0.266 & 0.803 & 0.878 & 0.917 && $0.999$ & $0.982$ & $0.970$ & $0.958$\tabularnewline
$n=1000$  & 0.235 & 0.810 & 0.881 & 0.912 && $1.000$ & $0.983$ & $0.970$ & $0.956$\tabularnewline
\hline\hline
\end{tabular}

\vspace{0.2in}
{\centering
\caption{Coverage Probabilities for True Upper Bound: $\mu$-smooth Approximation}
\label{tab:mu24_bind}
\begin{tabular}{lccccccccc}
\hline\hline
 & \multicolumn{4}{c}{Regularization:  ($\mu=0.02$)} && \multicolumn{4}{c}{Regularization: ($\mu=0.04$)}\tabularnewline
\cline{2-5}
\cline{7-10}
 & $R=1$ & $R=5$ & $R=10$ & $R=20$ && $R=1$ & $R=5$ & $R=10$ & $R=20$\tabularnewline
\hline
\hline
\multicolumn{1}{l}{A: ($J=5$)} & & & & & & & & \\
$n=$100   & $0.940$ & $0.959$ & $0.953$ & $0.961$ && $0.957$ & $0.962$ & $0.955$ & $0.961$\tabularnewline
$n=$250   & $0.961$ & $0.952$ & $0.955$ & $0.963$ && $0.962$ & $0.948$ & $0.943$ & $0.958$\tabularnewline
$n=$1000  & $0.952$ & $0.962$ & $0.958$ & $0.947$ && $0.950$ & $0.953$ & $0.958$ & $0.943$\tabularnewline
 & & & & & & & & & \\
\multicolumn{1}{l}{B: ($J=10$)}  & & & & & & & & \\
$n=$100   & $0.946$ & $0.964$ & $0.958$ & $0.969$ && $0.952$ & $0.963$ & $0.959$ & $0.966$\tabularnewline
$n=$250   & $0.949$ & $0.962$ & $0.972$ & $0.959$ && $0.949$ & $0.960$ & $0.956$ & $0.953$\tabularnewline
$n=$1000  & $0.955$ & $0.949$ & $0.955$ & $0.954$ && $0.958$ & $0.947$ & $0.952$ & $0.949$\tabularnewline
 & & & & & & & & & \\
\multicolumn{1}{l}{C: ($J=30$)}  & & & & & & & & \\
$n=$100   & $0.964$ & $0.974$ & $0.979$ & $0.976$ && $0.964$ & $0.968$ & $0.966$ & $0.968$\tabularnewline
$n=$250   & $0.970$ & $0.968$ & $0.968$ & $0.971$ && $0.962$ & $0.956$ & $0.959$ & $0.963$\tabularnewline
$n=$1000  & $0.950$ & $0.954$ & $0.957$ & $0.948$ && $0.947$ & $0.952$ & $0.957$ & $0.944$\tabularnewline
\hline\hline
\end{tabular}}
\end{table}

\begin{table}[h]
	\small
\centering
\caption{Excess Lengths of Confidence Intervals: Critical Value Correction}
\label{tab:length_CLR_bind}
\begin{tabular}{lccccccccc}
\hline\hline
 & \multicolumn{4}{c}{Simulated (No correction)} && \multicolumn{4}{c}{CV correction: $\mathcal C_{n,R}^{\text{CV}}$}\tabularnewline
\cline{2-5}
\cline{7-10}
  & $R=1$ & $R=5$ & $R=10$ & $R=20$ && $R=1$ & $R=5$ & $R=10$ & $R=20$\tabularnewline
\hline
\multicolumn{1}{l}{A: ($J=5$)} & & & & & & & & \\
$n=$100   & $0.007$ & $0.031$ & $0.031$ & $0.034$ && $0.081$ & $0.049$ & $0.044$ & $0.041$\tabularnewline
$n=$250   & $0.007$ & $0.019$ & $0.021$ & $0.021$ && $0.052$ & $0.031$ & $0.027$ & $0.024$\tabularnewline
$n=$1000  & $0.004$ & $0.009$ & $0.010$ & $0.011$ && $0.027$ & $0.016$ & $0.014$ & $0.013$\tabularnewline
 & & & & & & & & & \\
\multicolumn{1}{l}{B: ($J=10$)}  & & & & & & & & \\
$n=$100   & $-0.006$ & $0.022$ & $0.026$ & $0.028$ && $0.083$ & $0.047$ & $0.040$ & $0.036$\tabularnewline
$n=$250   & $-0.001$ & $0.015$ & $0.017$ & $0.018$ && $0.053$ & $0.029$ & $0.025$ & $0.022$\tabularnewline
$n=$1000  & $-0.001$ & $0.007$ & $0.009$ & $0.009$ && $0.027$ & $0.015$ & $0.013$ & $0.011$\tabularnewline
 & & & & & & & & & \\
\multicolumn{1}{l}{C: ($J=30$)}  & & & & & & & & \\
$n=$100   & $-0.015$ & $0.017$ & $0.022$ & $0.024$ && $0.085$ & $0.045$ & $0.038$ & $0.032$\tabularnewline
$n=$250   & $-0.010$ & $0.010$ & $0.013$ & $0.015$ && $0.053$ & $0.028$ & $0.023$ & $0.020$\tabularnewline
$n=$1000  & $-0.005$ & $0.005$ & $0.007$ & $0.008$ && $0.027$ & $0.014$ & $0.012$ & $0.010$\tabularnewline
\hline\hline
\end{tabular}

\vspace{0.2in}
{\centering
\caption{Excess Lengths of Confidence Intervals: $\mu$-smooth Approximation}
\label{tab:lengthmu24_bind}
\begin{tabular}{lccccccccc}
\hline\hline
 & \multicolumn{4}{c}{Regularization:  ($\mu=0.02$)} & \multicolumn{4}{c}{Regularization: ($\mu=0.04$)}\tabularnewline
\cline{2-5}
\cline{7-10}
  & $R=1$ & $R=5$ & $R=10$ & $R=20$ & & $R=1$ & $R=5$ & $R=10$ & $R=20$\tabularnewline
\hline
\multicolumn{1}{l}{A: ($J=5$)} & & & & & & & & \\
$n=$100   & $0.050$ & $0.034$ & $0.030$ & $0.029$ && $0.047$ & $0.032$ & $0.029$ & $0.027$\tabularnewline
$n=$250   & $0.030$ & $0.019$ & $0.017$ & $0.017$ && $0.028$ & $0.018$ & $0.016$ & $0.016$\tabularnewline
$n=$1000  & $0.013$ & $0.009$ & $0.008$ & $0.008$ && $0.012$ & $0.008$ & $0.008$ & $0.008$\tabularnewline
 & & & & & & & & & \\
\multicolumn{1}{l}{B: ($J=10$)}  & & & & & & & & \\
$n=$100   & $0.040$ & $0.025$ & $0.023$ & $0.022$ && $0.037$ & $0.022$ & $0.021$ & $0.020$\tabularnewline
$n=$250   & $0.022$ & $0.014$ & $0.015$ & $0.012$ && $0.022$ & $0.013$ & $0.012$ & $0.011$\tabularnewline
$n=$1000  & $0.009$ & $0.006$ & $0.006$ & $0.006$ && $0.009$ & $0.006$ & $0.006$ & $0.005$\tabularnewline
 & & & & & & & & & \\
\multicolumn{1}{l}{C: ($J=30$)}  & & & & & & & & \\
$n=$100   & $0.028$ & $0.017$ & $0.016$ & $0.014$ && $0.023$ & $0.015$ & $0.013$ & $0.012$\tabularnewline
$n=$250   & $0.015$ & $0.009$ & $0.008$ & $0.008$ && $0.012$ & $0.008$ & $0.007$ & $0.007$\tabularnewline
$n=$1000  & $0.005$ & $0.003$ & $0.003$ & $0.003$ && $0.005$ & $0.003$ & $0.003$ & $0.003$\tabularnewline
\hline\hline
\end{tabular}}
\end{table}

\begin{table}[h]
	\small
\centering
\caption{Coverage Probabilities for True Upper Bound: Critical Value Correction}
\label{tab:CLR_slack}
\begin{tabular}{lccccccccc}
\hline\hline
 & \multicolumn{4}{c}{Simulated (No correction)} & & \multicolumn{4}{c}{CV correction: $\mathcal C_{n,R}^{\text{CV}}$}\tabularnewline
\cline{2-5}
\cline{7-10}
 & $R=1$ & $R=5$ & $R=10$ & $R=20$ & & $R=1$ & $R=5$ & $R=10$ & $R=20$\tabularnewline
\hline
\multicolumn{1}{l}{A: ($J=5$)} & & & & & & & & \\
$n=100 $  & 0.620 & 0.899 & 0.920 & 0.935 && $0.998$ & $0.980$ & $0.966$ & $0.960$\tabularnewline
$n=250 $  & 0.647 & 0.903 & 0.933 & 0.945 && $0.999$ & $0.981$ & $0.970$ & $0.969$\tabularnewline
$n=1000$  & 0.700 & 0.911 & 0.927 & 0.944 && $0.998$ & $0.984$ & $0.972$ & $0.967$\tabularnewline
 & & & & & & & & & \\
\multicolumn{1}{l}{B: ($J=10$)}  & & & & & & & & \\
$n=100 $  & 0.533 & 0.875 & 0.920 & 0.938 && $0.997$ & $0.975$ & $0.963$ & $0.959$\tabularnewline
$n=250 $  & 0.520 & 0.886 & 0.922 & 0.935 && $0.999$ & $0.983$ & $0.966$ & $0.960$\tabularnewline
$n=1000$  & 0.535 & 0.879 & 0.917 & 0.933 && $1.000$ & $0.988$ & $0.973$ & $0.960$\tabularnewline
 & & & & & & & & & \\
\multicolumn{1}{l}{C: ($J=30$)}  & & & & & & & & \\
$n=100 $  & 0.301 & 0.833 & 0.911 & 0.940 && $0.998$ & $0.985$ & $0.976$ & $0.967$\tabularnewline
$n=250 $  & 0.324 & 0.840 & 0.895 & 0.933 && $0.999$ & $0.984$ & $0.974$ & $0.964$\tabularnewline
$n=1000$  & 0.280 & 0.833 & 0.905 & 0.926 && $1.000$ & $0.986$ & $0.975$ & $0.961$\tabularnewline
\hline\hline
\end{tabular}

\vspace{0.2in}
{\centering
\caption{Coverage Probabilities for True Upper Bound: $\mu$-smooth Approximation}
\label{tab:mu24_slack}
\begin{tabular}{lccccccccc}
\hline\hline
 & \multicolumn{4}{c}{Regularization:  ($\mu=0.02$)} & & \multicolumn{4}{c}{Regularization: ($\mu=0.04$)}\tabularnewline
\cline{2-5}
\cline{7-10}
 & $R=1$ & $R=5$ & $R=10$ & $R=20$ & & $R=1$ & $R=5$ & $R=10$ & $R=20$\tabularnewline
\hline
\hline
\multicolumn{1}{l}{A: ($J=5$)} & & & & & & & & \\
$n=$100  & $0.948$ & $0.957$ & $0.956$ & $0.958$ && $0.965$ & $0.977$ & $0.969$ & $0.973$\tabularnewline
$n=$250  & $0.963$ & $0.972$ & $0.977$ & $0.985$ && $0.973$ & $0.970$ & $0.976$ & $0.985$\tabularnewline
$n=$1000 & $0.968$ & $0.984$ & $0.977$ & $0.979$ && $0.969$ & $0.980$ & $0.976$ & $0.979$\tabularnewline
 & & & & & & & & & \\
\multicolumn{1}{l}{B: ($J=10$)}  & & & & & & & & \\
$n=$100  & $0.934$ & $0.969$ & $0.963$ & $0.973$ && $0.978$ & $0.982$ & $0.979$ & $0.991$\tabularnewline
$n=$250  & $0.968$ & $0.984$ & $0.986$ & $0.984$ && $0.978$ & $0.983$ & $0.981$ & $0.984$\tabularnewline
$n=$1000 & $0.979$ & $0.983$ & $0.983$ & $0.985$ && $0.978$ & $0.983$ & $0.983$ & $0.982$\tabularnewline
 & & & & & & & & & \\
\multicolumn{1}{l}{C: ($J=30$)}  & & & & & & & & \\
$n=$100  & $0.962$ & $0.976$ & $0.990$ & $0.986$ && $0.980$ & $0.991$ & $0.993$ & $0.993$\tabularnewline
$n=$250  & $0.982$ & $0.993$ & $0.997$ & $0.996$ && $0.985$ & $0.993$ & $0.994$ & $0.995$\tabularnewline
$n=$1000 & $0.979$ & $0.993$ & $0.994$ & $0.995$ && $0.980$ & $0.994$ & $0.993$ & $0.996$\tabularnewline
\hline\hline
\end{tabular}}

\vspace{0.1in}

Note: Among $J$ moment inequalities, $J/5$ of them are slack.
\end{table}

\begin{table}[h]
	\small
\centering
\caption{Excess Lengths of Confidence Intervals: Critical Value Correction }
\label{tab:length_CLR_slack}
\begin{tabular}{lccccccccc}
\hline\hline
  & \multicolumn{4}{c}{Simulated (No correction)} & & \multicolumn{4}{c}{CV correction: $\mathcal C_{n,R}^{\text{CV}}$}\tabularnewline
\cline{2-5}
\cline{7-10}
 &   $R=1$ & $R=5$ & $R=10$ & $R=20$ & & $R=1$ & $R=5$ & $R=10$ & $R=20$\tabularnewline
\hline
\multicolumn{1}{l}{A: ($J=5$)} & & & & & & & & \\
$n=$100   & $0.017$ & $0.033$ & $0.034$ & $0.037$ && $0.091$ & $0.052$ & $0.047$ & $0.044$\tabularnewline
$n=$250   & $0.011$ & $0.020$ & $0.022$ & $0.023$ && $0.056$ & $0.034$ & $0.029$ & $0.027$\tabularnewline
$n=$1000  & $0.005$ & $0.011$ & $0.011$ & $0.012$ && $0.028$ & $0.017$ & $0.015$ & $0.014$\tabularnewline
 & & & & & & & & & \\
\multicolumn{1}{l}{B: ($J=10$)}  & & & & & & & & \\
$n=$100  & $0.004$ & $0.026$ & $0.030$ & $0.031$ && $0.091$ & $0.051$ & $0.044$ & $0.039$\tabularnewline
$n=$250  & $0.003$ & $0.016$ & $0.019$ & $0.020$ && $0.056$ & $0.032$ & $0.027$ & $0.024$\tabularnewline
$n=$1000 & $0.001$ & $0.008$ & $0.010$ & $0.010$ && $0.028$ & $0.016$ & $0.014$ & $0.012$\tabularnewline
 & & & & & & & & & \\
\multicolumn{1}{l}{C: ($J=30$)}  & & & & & & & & \\
$n=$100  & $-0.015$ & $0.019$ & $0.025$ & $0.027$ && $0.091$ & $0.048$ & $0.041$ & $0.035$\tabularnewline
$n=$250  & $-0.006$ & $0.012$ & $0.015$ & $0.017$ && $0.056$ & $0.030$ & $0.025$ & $0.022$\tabularnewline
$n=$1000 & $-0.004$ & $0.006$ & $0.008$ & $0.009$ && $0.027$ & $0.015$ & $0.013$ & $0.011$\tabularnewline
\hline\hline
\end{tabular}

\vspace{0.2in}
{\centering
\caption{Excess Lengths of Confidence Intervals: $\mu$-smooth Approximation}
\label{tab:lengthmu24_slack}
\begin{tabular}{lccccccccc}
\hline\hline
 & \multicolumn{4}{c}{Regularization:  ($\mu=0.02$)} & & \multicolumn{4}{c}{Regularization: ($\mu=0.04$)}\tabularnewline
\cline{2-5}
\cline{7-10}
   & $R=1$ & $R=5$ & $R=10$ & $R=20$ & & $R=1$ & $R=5$ & $R=10$ & $R=20$\tabularnewline
\hline
\multicolumn{1}{l}{A: ($J=5$)} & & & & & & & & & \\
$n=$100   & $0.054$ & $0.040$ & $0.036$ & $0.036$ & & $0.051$ & $0.036$ & $0.033$ & $0.032$\tabularnewline
$n=$250   & $0.035$ & $0.024$ & $0.022$ & $0.021$ & & $0.031$ & $0.021$ & $0.020$ & $0.019$\tabularnewline
$n=$1000  & $0.016$ & $0.011$ & $0.010$ & $0.010$ & & $0.014$ & $0.010$ & $0.010$ & $0.009$\tabularnewline
  & & & & & & & & & \\
\multicolumn{1}{l}{B: ($J=10$)}  & & & & & & & & \\
$n=$100  & $0.047$ & $0.031$ & $0.030$ & $0.029$ & &$0.042$ & $0.027$ & $0.025$ & $0.025$\tabularnewline
$n=$250  & $0.028$ & $0.019$ & $0.018$ & $0.017$ & &$0.023$ & $0.016$ & $0.015$ & $0.014$\tabularnewline
$n=$1000 & $0.012$ & $0.008$ & $0.008$ & $0.008$ & &$0.010$ & $0.008$ & $0.007$ & $0.007$\tabularnewline
  & & & & & & & & & \\
\multicolumn{1}{l}{C: ($J=30$)}  & & & & & & & & & \\
$n=$100   & $0.036$ & $0.025$ & $0.024$ & $0.022$ & & $0.028$ & $0.019$ & $0.018$ & $0.017$\tabularnewline
$n=$250   & $0.021$ & $0.014$ & $0.013$ & $0.012$ & & $0.015$ & $0.011$ & $0.010$ & $0.010$\tabularnewline
$n=$1000  & $0.008$ & $0.005$ & $0.005$ & $0.005$ & & $0.007$ & $0.005$ & $0.005$ & $0.005$\tabularnewline
\hline\hline
\end{tabular}}

\vspace{0.1in}

Note: Among $J$ moment inequalities, $J/5$ of them are slack.
\end{table}

\clearpage

\begin{small}

\bibliographystyle{newapa}
\end{small}

\clearpage

\appendix

\section{Proofs}
\subsection{Hausdorff consistency}
Below, we let $\xi=(X,u_1,\cdots,u_R)\in \mathcal X\times \mathcal U^R$ be a random vector that stacks $X$ and $(u_1,\cdots,u_R)$.
We then let $\hat m_R(\xi,\theta)\equiv R^{-1}\sum_{r=1}^R M(X,u_r,\theta)$.
For each $\theta$, let $\hat\Sigma_{n,R}(\theta)$ be defined as in \eqref{eq:hatSigma} and let $\hat V_{n,R}(\theta)=\text{diag}(\hat\Sigma_{n,R}(\theta))$.
We use the following assumption to establish consistency of $\hat\Theta_n(c_{n,R})$ in Proposition \ref{prop:consistency}.
\begin{assumption}\label{as:cht}
The following conditions hold.
\begin{itemize}
	\item[(i)] The parameter space $\Theta$ is a nonempty compact subset of $\mathbb R^d$;
	\item[(ii)] $M:\mathcal X\times \mathcal U\times\Theta'\to\mathbb R^J$: is jointly measurable in $(x,u,\theta)$, where $\Theta'$ is a neighborhood of $\Theta$;
	\item[(iii)]  For each $R$, the collection $\{\hat m_R(\cdot,\theta),\theta\in\Theta'\}$ is a $P$-Donsker class. $\{\xi_i\}_{i=1}^n$ is an i.i.d. sample;
	\item[(iv)] There exist positive constants $C$ and $\delta$ such that for all $\theta\in\Theta$,
	\begin{align*}
	\|E_P[m(X_i,\theta)]\|_+\ge C(d(\theta,\Theta_I)\wedge \delta).
	\end{align*}
	\item[(v)] $\hat V_{n,R}(\theta)^{-1}=V_P(\theta)^{-1}+o_p(1)$ uniformly in $\theta\in\Theta'$ when $n\to\infty$ with $R$ fixed, where $V_P(\theta)=diag(\Sigma_P(\theta))$ is a diagonal matrix with positive diagonal elements and is continuous for all $\theta\in\Theta'.$
\end{itemize}	
\end{assumption}
The imposed conditions are analogous to Condition M.1 in \cite{ChernozhukovHanTamer2007} (see Section 3.2 in their paper for details). A key modification is that we assume that $\{\hat m_R(\cdot,\theta),\theta\in\Theta\{$ is $P$-Donsker instead of $\{ m(\cdot,\theta),\theta\in\Theta\}$ being $P$-Donsker. This ensures that the sample moments converge to the population counterpart even with a finite number of draws. This condition is satisfied when $\{M(\cdot,\cdot,\theta),\theta\in\Theta\}$ is a $P$-Donsker class, which holds for a wide class of functions \citep{PakesPollard1989,Van-Der-Vaart:1996aa}. Assumption \ref{as:cht} (v) requires that the estimators of the asymptotic variance converge with a finite number of draws. This holds under mild  regularity conditions. A more primitive condition is given in Assumption \ref{as:onP}.

\bigskip
\begin{proof}[\rm Proof of Proposition \ref{prop:consistency}]
The result follows almost immediately from Theorems 3.1 and 4.2 of \cite{ChernozhukovHanTamer2007} (CHT below). Hence, we briefly sketch the argument below.

Under the imposed assumptions, Condition C.1 (a)-(c) in CHT follows. This step is the same as in the proof of Theorem 4.2 in CHT.
For Condition C.1-(d), note that
\begin{align*}
E_\xi[\hat m_R(\xi,\theta)]=E_\xi[R^{-1}\sum_{r=1}^RM(X,u_{r},\theta)]=E_X[R^{-1}\sum_{r=1}^R E_{u_r}[M(X,u_{r},\theta)|X]]=E_P[m(X,\theta)].
\end{align*}
By  $\{\hat m_R(\cdot,\theta),\theta\in\Theta\}$ being $P$-Donsker, it follows that
\begin{align*}
\sup_{\theta\in\Theta}\big\|n^{-1}\sum_{i=1}^n\hat m_R(\xi_i,\theta)-E_P[m(X_i,\theta)]\big\|=o_p(1),
\end{align*}
for a fixed $R$ and $n\to\infty$. Together with the uniform convergence of the weighting matrix (Assumption \ref{as:cht} (v)) and the continuous mapping theorem, it implies Condition C.1-(d) of CHT.

Again by the Donskerness,  uniformly over $\Theta_I$
\begin{align*}
n\|\bar m_{n,R}(\theta)'\hat V_{n,R}(\theta)^{-1/2}\|^2_+=\|\sqrt n(\bar  m_{n,R}(\theta)-E[m(X_i,\theta)]+\xi_n(\theta))]'\hat V_{n,R}(\theta)^{-1/2}\|_+^2=O_p(1),
\end{align*}
where the last equality follows because of the Donskerness and $\xi_n(\theta)\le 0$ on $\Theta_I$. Hence, one may take $a_n=n$. The conclusion of the proposition then follows from Theorem 3.1 in CHT.	
\end{proof}

\subsection{Auxiliary results for regularization based confidence regions}
For each $\theta\in\Theta$ and $\mu>0$, define
\begin{align}
Z_{\mu,n,R}(\theta)&\equiv S_\mu(\sqrt n\bar m_{n,R}(\theta),\hat\Sigma_{n,R}(\theta))-S_\mu(\sqrt nE_P[m(X_i,\theta)],\Sigma_P(\theta))\label{eq:B1}\\
b_\mu(\theta)&\equiv S_\mu(\sqrt nE[m(X_i,\theta)],\Sigma_P(\theta))-S(\sqrt nE_P[m(X_i,\theta)],\Sigma_P(\theta)).\label{eq:B2}
\end{align}	
In what follows, we denote the covariance matrix of the moment function under distribution $P$ by $\Sigma_P(\theta)=E_P[m(X_i,\theta)m(X_i,\theta)'],$ and we let $V_P(\theta)=\text{diag}(\Sigma_P(\theta))$. Similarly, we let the correlation matrix be $\Omega_P(\theta)=V_P(\theta)^{-1/2}\Sigma_P(\theta)V_P(\theta)^{-1/2}$.
Here is a set of high-level conditions for the asymptotic size control.

\begin{condition}\label{cond:high1}
There exist $\mathcal M\subset \mathbb R_{++}$, $\mathcal C\subset\mathbb R$, and $\mathcal F\subset \Theta\times \mathcal P$ such that $\mathcal M,\mathcal C$ are  closed intervals, and the following conditions hold:

\noindent (i) If a sequence $(\mu_n,c_n,\theta_n,P_n)\in\mathcal M\times\mathcal C\times\mathcal F$ satisfies $E_{P_n}[m(X_i,\theta_n)]\to m^*$, $\Omega_{P_n}(\theta_n)\to \Omega,$ and $(\mu_n,c_n)\to (\mu,c)\in \mathcal M\times\mathcal C$, then
\begin{align}
P_n(Z_{\mu_n,n,R}(\theta_n)\le c_n)\to \text{Pr}( Z_{\mu,R}\le c)\label{eq:high1_1}
\end{align}
for all continuity points of $Z_{\mu,R}$ whose distribution is determined by $(\Omega,\mu,R)$. The cumulative distribution function $J(c)=Pr(Z_{\mu,R}\le c)$ is strictly increasing at the $1-\alpha$ quantile.

\noindent (ii)
for the sequence $(\mu_n,\theta_n,P_n)$ defined above,
\begin{align}
c_{n,R,1-\alpha}(\theta_n)\stackrel{P_n}{\to}c_{\mu,1-\alpha},\label{eq:high1_2}
\end{align}
where $c_\mu$ is the $1-\alpha$ quantile of $Z_{\mu,R}$.
\end{condition}

Under this condition, we obtain the following generic size control result, which we will apply to establish Theorem \ref{thm:size_control}.
\begin{proposition}\label{prop:size}
	Suppose that $S_\mu$ is a $\mu$-smooth approximation to $S$. Suppose Condition \ref{cond:high1} holds. Let $\mathcal C_n=\{\theta\in\Theta:\tilde T_{n,R}(\theta)\le\tilde c_{n,R}(\theta)\}$.
	Then, for any $R\in\mathbb N$,
\begin{align}
	\liminf_{n\to\infty}\inf_{\mu\in\mathcal M}\inf_{(\theta,P)\in\mathcal F}P\Big(\theta\in\tilde{\mathcal C}_{n,R}\Big)\ge 1-\alpha.
\end{align}
\end{proposition}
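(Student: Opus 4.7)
My plan is to reduce the coverage probability to one involving only $Z_{\mu,n,R}(\theta)$ by absorbing the smoothing bias into the additive $\sqrt n\mu\beta$ term of $\tilde c_{n,R}$, and then to appeal to Condition \ref{cond:high1} through a standard subsequence argument. Writing $\tilde T_{n,R}(\theta)=Z_{\mu,n,R}(\theta)+S_\mu(\sqrt n E_P[m(X_i,\theta)],\Sigma_P(\theta))$, I would use the scaling $S_\mu(\sqrt n m,\Sigma)=\sqrt n\,\phi_\mu(V^{-1/2}m)$ from Assumption \ref{as:onSmu} together with the uniform bound $\phi_\mu\le\phi+\beta\mu$ from Definition \ref{def:musmooth}(i). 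For $(\theta,P)\in\mathcal F$ we have $E_P[m_j(X_i,\theta)]\le 0$, and both index functions in Table \ref{tab:musmooth} vanish at any vector with non-positive coordinates, so $S(\sqrt n E_P[m(X_i,\theta)],\Sigma_P(\theta))=0$ and hence $S_\mu(\sqrt n E_P[m(X_i,\theta)],\Sigma_P(\theta))\le\sqrt n\mu\beta$. This immediately gives
\[
P\big(\theta\in\tilde{\mathcal C}_{n,R}\big)\ge P\big(Z_{\mu,n,R}(\theta)\le c_{n,R,1-\alpha}(\theta)\big),
\]
so the problem reduces to showing that the right-hand side is at least $1-\alpha$ in liminf, uniformly in $(\mu,\theta,P)\in\mathcal M\times\mathcal F$.

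I would establish this by contradiction. If the uniform liminf were strictly below $1-\alpha$, then along some subsequence $\{n_k\}$ one could choose $(\mu_k,\theta_k,P_k)\in\mathcal M\times\mathcal F$ with $P_k(Z_{\mu_k,n_k,R}(\theta_k)\le c_{n_k,R,1-\alpha}(\theta_k))\to L<1-\alpha$. Because $\mathcal M$ is a compact interval, the set of correlation matrices with determinant bounded below by $\epsilon$ is compact (Assumption \ref{as:onP}(v)), and $E_{P_k}[m(X_i,\theta_k)]$ takes values in $(-\infty,0]^J$, I can refine to a further subsequence along which $\mu_k\to\mu^*\in\mathcal M$, $\Omega_{P_k}(\theta_k)\to\Omega^*$, and $E_{P_k}[m(X_i,\theta_k)]\to m^*\in[-\infty,0]^J$ componentwise in the extended real line. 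This places the refined subsequence inside the scope of Condition \ref{cond:high1}.

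Condition \ref{cond:high1}(ii) then delivers $c_{n_k,R,1-\alpha}(\theta_k)\stackrel{P_k}{\to}c_{\mu^*,1-\alpha}$, where $c_{\mu^*,1-\alpha}$ is the $(1-\alpha)$-quantile of the limit variable $Z_{\mu^*,R}$. For any $\epsilon>0$,
\begin{align*}
&P_k\big(Z_{\mu_k,n_k,R}(\theta_k)\le c_{n_k,R,1-\alpha}(\theta_k)\big)\\
&\qquad\ge P_k\big(Z_{\mu_k,n_k,R}(\theta_k)\le c_{\mu^*,1-\alpha}-\epsilon\big)-P_k\big(|c_{n_k,R,1-\alpha}(\theta_k)-c_{\mu^*,1-\alpha}|>\epsilon\big),
\end{align*}
the second term vanishing by (ii). Choosing $\epsilon$ so that $c_{\mu^*,1-\alpha}-\epsilon$ is a continuity point of $J(\cdot)=\mathrm{Pr}(Z_{\mu^*,R}\le\cdot)$ (continuity points are dense), Condition \ref{cond:high1}(i) applied with the constant sequence $c_n\equiv c_{\mu^*,1-\alpha}-\epsilon$ yields $\liminf_k P_k(Z_{\mu_k,n_k,R}(\theta_k)\le c_{\mu^*,1-\alpha}-\epsilon)\ge J(c_{\mu^*,1-\alpha}-\epsilon)$. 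Letting $\epsilon\downarrow 0$ along continuity points and using the strict monotonicity of $J$ at its $(1-\alpha)$-quantile (which, together with right-continuity of $J$, makes $c_{\mu^*,1-\alpha}$ itself a continuity point), the right-hand side tends to $1-\alpha$, contradicting $L<1-\alpha$.

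The main obstacle is the compactification step: components of $E_{P_k}[m(X_i,\theta_k)]$ may drift to $-\infty$, corresponding to strictly slack moment inequalities, and this would be a genuine source of non-uniformity for the un-smoothed statistic $T_{n,R}$. The key payoff of regularization is precisely Condition \ref{cond:high1}(i), which is designed so that the limiting distribution of $Z_{\mu^*,R}$ depends only on $(\Omega^*,\mu^*,R)$ and continuously so; this makes the appearance of $m^*\in[-\infty,0]^J$ harmless. Once this robustness is in place, the remainder is a routine Portmanteau-plus-subsequence argument, and the overall proof of Theorem \ref{thm:size_control} will consist of verifying Condition \ref{cond:high1} from Assumptions \ref{as:onSmu}--\ref{as:onP} via a first-order $\delta$-method applied to $S_\mu$, which is legitimate because the gradient bound in Assumption \ref{as:onP}(vii) rules out degenerate expansions.
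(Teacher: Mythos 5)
Your proof is correct and follows essentially the same route as the paper's: you absorb the population-level smoothing bias $b_\mu(\theta)\le\sqrt n\mu\beta$ into the critical value via Definition \ref{def:musmooth}(i) (using that $S$ vanishes at the non-positive population moments), reduce coverage to $P\big(Z_{\mu,n,R}(\theta)\le c_{n,R,1-\alpha}(\theta)\big)$, and then run a compactness/subsequence argument through Condition \ref{cond:high1}. The only differences are presentational — you phrase the second step as a contradiction and handle the random critical value with an $\epsilon$-buffer and Portmanteau argument where the paper passes to an almost-surely convergent subsequence, and you are slightly more careful in compactifying $E_{P_k}[m(X_i,\theta_k)]$ in $[-\infty,0]^J$ before invoking Condition \ref{cond:high1}(i).
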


\begin{proof}
Let $\theta\in \Theta_I(P).$	
Note that we have $\tilde T_{n,R}(\theta)=Z_{\mu,n}(\theta)+b_\mu(\theta),$ where $Z_{\mu,n}$ and $b_\mu$ are as in \eqref{eq:B1}-\eqref{eq:B2}.  Therefore, we may restate the coverage of $\theta$ by the confidence region as follows:
\begin{align}
\theta& \in \tilde{\mathcal C}_{n,R}\notag\\
&\Leftrightarrow \tilde T_{n,R}(\theta)\le \tilde c_{n,R}(\theta)\notag\\
&\Leftrightarrow  \tilde T_{n,R}(\theta)\le c_{n,R,1-\alpha}(\theta)+\sqrt n\mu\beta\notag\\
&\Leftrightarrow Z_{\mu,n,R}(\theta)+b_\mu(\theta) \le c_{n,R,1-\alpha}(\theta)+\sqrt n\mu\beta\notag\\
&\Leftarrow Z_{\mu,n,R}(\theta) \le c_{n,R,1-\alpha}(\theta),\label{eq:thm1_1}
\end{align}
where the last step follows from Definition \ref{def:musmooth} (i).

Let the asymptotic confidence size be $AsySz\equiv \liminf_{n\to\infty}\inf_{\mu\in\mathcal M}\inf_{(\theta,P)\in\mathcal F}P(\theta\in\tilde{\mathcal C}_n).$
Then, there is a sequence $\{n\}$ such that $(\mu_n,c_n)\in\mathcal M\times\mathcal C$, $ (\theta_n,P_n)\in\mathcal F$, and
$\liminf_{n\to\infty}P_n(\theta_n\in\tilde{\mathcal C}_n)=AsySz.$ By compactness of  $\Psi$ and $\mathcal M$, there is a further subsequence along which $\Omega_{P_n}(\theta_n)\to \Omega,$ and $\mu_n\to \mu$. For notational simplicity, we use the same index $\{n\}$ for the subsequence.
 By Condition \ref{cond:high1} (ii) and passing to a further subsequence, one has $c_{n,R,1-\alpha}(\theta_n){\to}c_{\mu,1-\alpha}$ almost surely. Again by Condition
\ref{cond:high1} (i), we then obtain
\begin{align}
P_n( Z_{\mu_n,n,R}(\theta_n) \le c_{n,R,1-\alpha}(\theta_n))\to Pr( Z_{\mu,R}\le c_{\mu,1-\alpha})\ge 1-\alpha. \label{eq:thm1_2}
\end{align}
By \eqref{eq:thm1_1} and \eqref{eq:thm1_2},
\begin{align}
\liminf_{n\to\infty}P_n(\theta_n\in\tilde{\mathcal C}_n)\ge  Pr( Z_{\mu,R}\le c_{\mu,1-\alpha})\ge 1-\alpha.
\end{align}
We  may therefore conclude $AsySz\ge 1-\alpha.$
\end{proof}

\subsection{Proof of Theorem 3.1}
\begin{proof}[\rm Proof of Theorem \ref{thm:size_control}]
Below, we show Condition \ref{cond:high1}.	

First, by Assumption \ref{as:onSmu} (ii),
\begin{align}
Z_{\mu,n,R}(\theta)&= S_\mu(\sqrt n\bar m_{n,R}(\theta),\hat \Sigma_{n,R}(\theta))-S_\mu(\sqrt nE_P[m(X_i,\theta)],\Sigma_P(\theta))\notag\\
&=\sqrt n\big(\phi_\mu(\hat V_{n,R}(\theta)^{-1/2}\bar m_{n,R}(\theta))-\phi_\mu(V_P(\theta)^{-1/2}E_P[m(X_i,\theta)])\big)\notag\\
&=D\phi_\mu[\tilde m_{n,R}(\theta)](\sqrt n(\hat V_{n,R}(\theta)^{-1/2}\bar m_{n,R}(\theta)-V_P(\theta)^{-1/2}E_P[m(X_i,\theta)])),\label{eq:sc1}
\end{align}
where $V_P=\text{diag}(\Sigma_P)$. The last equality in \eqref{eq:sc1} follows from the mean value theorem (applied componentwise), and $\tilde m_{n,R}(\theta)$ is a point between  $\hat V_{n,R}(\theta)^{-1/2}\bar m_{n,R}(\theta)$ and $V_P(\theta)^{-1/2}E_P[m(X_i,\theta)])$, which may be different across components.

Let $\Gamma = \text{cl}(\{V_{P}(\theta)^{-1/2}E_P[m(X,\theta)],(\theta,P)\in\mathcal F\})$. This is a compact subset of $\mathbb R^J$ due to Assumption \ref{as:onP} (vi).
Note that, there exists a $L>0$ such that, uniformly over $m\in\Gamma$,
\begin{align}
	D\phi_\mu[m]'\Omega_P(\theta)D\phi_\mu[m]\le \|\Omega_P\|_{op}\|D\phi_\mu[m]\|^2\le L.
\end{align}
where $\|A\|_{op}$ and $\|A\|_F$ denote the operator and Frobenius norms of a matrix $A$ respectively, and the last inequality follows from  $\|A\|_{op}\le \|A\|_F$, $\|\Omega_P(\theta)\|_F\le J$, and $\sup_{m\in\Gamma}\|D\phi_\mu[m]\|^2$ being uniformly bounded by the continuity of $D\phi_\mu[\cdot]$ (by Definition \ref{def:musmooth}) and compactness of $\Gamma$. As we show below, the asymptotic variance of $Z_{\mu,n,R}$ is then bounded by $L$. Below, we let $\bar c$ be the $1-\alpha$ quantile of a mean-zero normal distribution with variance $L$, which will serve as the uniform upper bound for the critical value.

Let $(\mu_n,c_n,\theta_n,P_n)\in [\underline M,\overline M]\times[0,\overline c]\times\mathcal F$ such that $\Omega_{P_n}(\theta_n)\to \Omega,$ and $(\mu_n,c_n)\to (\mu,c)\in [\underline M,\overline M]\times[0,\overline c].$
Then, by \eqref{eq:sc1},
\begin{align}
Z_{\mu_n,n,R}(\theta_n)&=D\phi_{\mu_n}[\tilde m_{n,R}(\theta_n)](\sqrt n(\hat V_{n,R}(\theta_n)^{-1/2}\bar m_{n,R}(\theta_n)-V_{P_n}(\theta_n)^{-1/2}E_{P_n}[m(X_i,\theta_n)]))\notag\\
&=D\phi_{\mu_n}[m^*](\sqrt n(\hat V_{n,R}(\theta_n)^{-1/2}\bar m_{n,R}(\theta_n)-V_{P_n}(\theta_n)^{-1/2}E_{P_n}[m(X_i,\theta_n)])+r_n,\label{eq:sc2}
\end{align}
where
\begin{align}
r_n=(D\phi_{\mu_n}[\tilde m_{n,R}(\theta_n)]-D\phi_\mu[m^*])(\sqrt n(\hat V_{n,R}(\theta_n)^{-1/2}\bar m_{n,R}(\theta_n)-V_{P_n}(\theta_n)^{-1/2}E_{P_n}[m(X_i,\theta_n)]).	\label{eq:sc3}
\end{align}
Note that, one may write
\begin{align}
	\bar m_{n,R}(\theta)=\frac{1}{n}\sum_{i=1}^n \hat m_R(X_i,u_{i},\theta),\label{eq:sc4}
\end{align}
where $\hat m_R(X_i,u_{i},\theta)=\frac{1}{R}\sum_{r=1}M(X_i,u_{i,r},\theta)$.

 By Assumption \ref{as:onP} (vi) and Minkowski's inequality, $E_P[|\hat m_R(X_i,u_{i},\theta)/\sigma_{P,j}(\theta)|^{2+\delta}]\le M'$ for some uniform constant  $0<M'<\infty$.
Together with Assumption \ref{as:onP} (iii),  we may then apply Lemma 2 in \cite{Andrews:2009aa}, which uses a triangular-array LLN and CLT to obtain
\begin{align}
\sqrt n(\hat V_{n,R}(\theta_n)^{-1/2}\bar m_{n,R}(\theta_n)-V_{P_n}(\theta_n)^{-1/2}E_{P_n}[m(X_i,\theta_n)]\stackrel{P_n}{\leadsto}\mathcal W,\label{eq:sc5}
\end{align}
where $\mathcal W$ is a multivariate normal random vector with covariance matrix $\Omega$.
By $\phi_\mu$ being a $\mu$-smooth approximation and Definition \ref{def:musmooth} (ii),
\begin{multline}
r_n
\le (K+\alpha/\mu_n)\times \|\tilde m_{n,R}(\theta_n)-m^*\|\times O_{P_n}(1)\\
\le (K+\alpha/\underline M)\times \|\tilde m_{n,R}(\theta_n)-m^*\|\times O_{P_n}(1)=o_{P_n}(1),\label{eq:sc6}
\end{multline}
where the last equality follows from  $\tilde m_{n,R}(\theta_n)$ being between $\bar m_{n,R}(\theta_n)$ and $m^*$ and the uniform law of large numbers applied to $\bar m_{n,R}(\theta_n)$, which again follows from Lemma 2 in \cite{Andrews:2009aa}. By Definition \ref{def:musmooth} (iii), Assumption \ref{as:onP} (vi), and the extended continuous mapping theorem \citep[][Theorem 1.11.1]{Van-Der-Vaart:1996aa},
we obtain
\begin{align}
Z_{\mu_n,n,R}(\theta_n)\stackrel{P_n}{\leadsto}D_\mu[m^*](\mathcal W)=:Z_{\mu,R}.\label{eq:sc7}
\end{align}
Note that $Z_{\mu,R}$ is a mean zero normal distribution  with variance $D_\mu[m^*]\Omega D_\mu[m^*]'$. By Assumption \ref{as:onP} (v) and (vii) and a continuity argument, $D_\mu[m^*]\ne 0$ and $\Omega$ is positive definite. Hence, the variance of $Z_{\mu,R}$ is bounded away from 0. This ensures that the cumulative distribution function of $Z_{\mu,R}$ is continuous and strictly increasing at its $1-\alpha$ quantile. This establishes Condition \ref{cond:high1} (i).

Next, we show Condition \ref{cond:high1} (ii).
Let a sequence $\{(\mu_n,\theta_n)\in[\underline M,\overline M]\times\Theta,n\ge 1\}$ be given, fix $c$, and let $\mathbf C$ be the set of sequences $\{P_n\}$ such that $\Omega_{P_n}(\theta_n)\to \Omega$ and $E_{P_n}[m(X,\theta_n)]\to m^*$ for some $\Omega\in\Psi$ and $m^*\in\mathbb R^J$. In what follows, we write $J_{n,R}(c,P_n)=P_n(Z_{\mu_n,n,R}(\theta_n)\le c)$ and $J(c)=Pr(Z_{\mu,R}\le c)$.
In Part (i), we have shown that
\begin{align}
	\rho_L(J_{n,R}(\cdot,P_n), J(\cdot))\to 0,~\text{as }n\to\infty,\label{eq:b2_1}
\end{align}
where $\rho_L$ is a metric that metrizes the weak convergence (e.g. the bounded Lipschitz metric).
For each $n,$ $\hat P_{n,R}$ be the empirical distributions of $\{X_{i},u_{i,1},\cdots,u_{i,R}\}_{i=1}^n$. Let $\hat P_{n,\infty}$ denote the joint distribution of $(X,u)$, where $X$'s law is the empirical distribution  based on the original sample $\{X_1,\cdots, X_n\}$, while $u$'s conditional law $P(\cdot|X)$  is known. The bootstrap sample $\{X_1^*,\cdots, X_n^*\}$ is drawn from the empirical distribution, while for each $X^*_i$, $\{u_{i,1},\cdots,u_{i,R}\}$ is drawn from $P(\cdot|X^*_i)$. Therefore,  $\{X^*_i,u_{i,1},\cdots,u_{i,R}\}_{i=1}^n$ can be viewed as a sample from $\hat P_{n,\infty}.$

Define
\begin{align}
	Z^*_{\mu_n,n,R,\infty}(\theta_n)&\equiv S_\mu(\sqrt n\bar m_{n,R}(\theta_n),\hat \Sigma_{n,R}(\theta_n))-S_\mu(\sqrt nE_{\hat P_{n,\infty}}[m(X_i,\theta_n)],\Sigma_{\hat P_{n,\infty}}(\theta_n)),\label{eq:b2_1a}
\end{align}
where for each $\theta\in\Theta$,
\begin{align}
	E_{\hat P_{n,\infty}}[m(X_i,\theta)]&=\bar m_n(\theta)=\frac{1}{n}\sum_{i=1}^n m(X_i,\theta)\\
	\Sigma_{\hat P_{n,\infty}}(\theta))&=\frac{1}{n}\sum_{i=1}^n(m(X_i,\theta)-\bar m_n(\theta))(m(X_i,\theta)-\bar m_n(\theta))'.
\end{align}
We then let $J_{n,R}(c,\hat P_{n,\infty})\equiv \hat P_{n,\infty}(Z^*_{\mu_n,n,R,\infty}\le c|X^n),$ where $X^n=(X_1,\cdots,X_n)$.
Note that $Z^*_{\mu_n,n,R,\infty}(\theta_n)$ differs from the root $Z^*_{\mu,n,R,R_2}$ we compute in Algorithm 1 as the centering term in \eqref{eq:b2_1a} is not based on any simulation (or it can be viewed as the limit with $R_2=\infty$).
Below, we first show that there is a subsequence $k_n$ such that $\{\hat P_{k_n,\infty}\}\in \mathbf C$ almost surely. We then apply Theorem 1.2.1 in \cite{Politis:1999aa} to establish bootstrap consistency for this infeasible bootstrap. Bootstrap consistency for Algorithm 1 is then established by showing that replacing $(E_{\hat P_{n,\infty}}[m(X_i,\theta)],\Sigma_{\hat P_{n,\infty}}(\theta)))$ with $(\bar m_{n,R_2},\hat\Sigma_{n,R_2})$ has asymptotically negligible effects as $R_2\to\infty.$

Under Assumption \ref{as:onP} and arguing as in \eqref{eq:sc5}, we may apply Lemma 2 in \cite{Andrews:2009aa}, which ensures that
\begin{align}
\bar m_{n,R}(X_i,\theta_n)-E_{P_n}[m(X_i,\theta_n)]=o_{P_n}(1),\label{eq:b2_2}
\end{align}
and
\begin{align}
	\hat V_{n,R}(\theta_n)^{-1/2}\hat \Sigma_{n,R}(\theta_n)\hat V_{n,R}(\theta_n)^{-1/2}\stackrel{P_n}{\to}\Omega.\label{eq:b2_3}
\end{align}

Hence, for any subsequence of these sequences that are converging in probability, one also has a further subsequence $\{k_n\}$ along which convergence in \eqref{eq:b2_2} and \eqref{eq:b2_3} hold almost surely under $P_n$ (instead of in probability) for all $n$. Therefore $\{\hat P_{k_n,\infty}\}\in\mathbf C$ with probability 1. By Theorem 1.2.1 in \cite{Politis:1999aa}, it then holds that $\rho_L(J_{k_n,R}(c,\hat P_{k_n,\infty}),J_{k_n,R}(c,P_{k_n}))\to 0$ with probability 1. Since the choice of the original subsequence was arbitrary, this in turn implies
\begin{align}
	\rho_L(J_{n,R}(\cdot,\hat P_{n,\infty}),J_{n,R}(\cdot,P_{n}))=o_{P_n}(1).\label{eq:b2_4}
\end{align}

Now consider the root $Z^*_{\mu,n,R,R_2}(\theta)=S_\mu(\sqrt n\bar m^*_{n,R}(\theta),\hat\Sigma^*_{n,R}(\theta))-S_\mu(\sqrt n\bar m_{n,R_2}(\theta),\hat \Sigma_{n,R_2}(\theta))$ we compute in Algorithm 1. For each $c\in\mathbb R$, let $L_{n,R_2}(c)\equiv \hat P_{n,\infty}(Z^*_{\mu_n,n,R,R_2}(\theta_n)\le c\mid X^n).$ Below we take $\rho_L$ to be the bounded Lipschitz metric. Then,
\begin{align}
	\rho_L(L_{n,R_2}(\cdot), J_{n,R}&(\cdot,\hat P_{n,\infty}))\notag\\
	&=\sup_{h\in \text{BL}_1}\Big|E_{\hat P_{n,\infty}}\Big[h(Z^*_{\mu,n,R,R_2}(\theta_n))\mid X^n\Big]-E_{\hat P_{n,\infty}}\Big[h(Z^*_{\mu,n,R,\infty}(\theta_n))\mid X^n\Big]\Big|\notag\\
	&=\Big|E_{\hat P_{n,\infty}}\Big[S_\mu(\sqrt n\bar m_{n,R_2}(\theta),\hat \Sigma_{n,R_2}(\theta))-S_\mu(\sqrt n\bar m_{n,\infty}(\theta),\hat \Sigma_{n,\infty}(\theta))\mid X^n\Big]\Big|.\label{eq:b2_5}
\end{align}
Arguing as in \eqref{eq:sc1}-\eqref{eq:sc2}, one has
\begin{align}
S_\mu(\sqrt n\bar m_{n,R_2}&(\theta),\hat \Sigma_{n,R_2}(\theta))-S_\mu(\sqrt n\bar m_{n,\infty}(\theta),\hat \Sigma_{n,\infty}(\theta))	\notag\\
&=D\phi_{\mu_n}[\tilde m_{n,R_2}(\theta_n)]\sqrt n(\hat V_{n,R_2}(\theta_n)^{-1/2}\bar m_{n,R_2}(\theta_n)-\hat V_{n}(\theta_n)^{-1/2}\bar m_{n}(\theta_n))=o_{\hat P_{n,\infty}}(1),\label{eq:b2_6}
\end{align}
where $\hat V_{n,R_2}(\theta)=diag(\Sigma_{\hat P_{n,\infty}}(\theta))$, and $\tilde m_{n,R_2}(\theta_n)$ is a point between $\hat V_{n,R_2}(\theta_n)^{-1/2}\bar m_{n,R_2}(\theta_n)$ and $\hat V_{n}(\theta_n)^{-1/2}\bar m_{n}(\theta_n))$.
The last equality follows by the assumption on $R_2$. Note that, conditional on $X^n$, $S_\mu(\sqrt n\bar m_{n,R_2}(\theta),\hat \Sigma_{n,R_2}(\theta))-S_\mu(\sqrt n\bar m_{n,\infty}(\theta),\hat \Sigma_{n,\infty}(\theta))$ is uniformly integrable, and hence it converges in $L^1$ by \eqref{eq:b2_6}. Therefore, by \eqref{eq:b2_5}, it follows that
\begin{align}
	\rho_L(L_{n,R_2}(\cdot), J_{n,R}&(\cdot,\hat P_{n,\infty}))=o_{P_n}(1).\label{eq:b2_7}
\end{align}
Combining \eqref{eq:b2_1}, \eqref{eq:b2_4} and \eqref{eq:b2_7}, we conclude that
\begin{align}
	\rho_L(L_{n,R_2}(\cdot),J(\cdot))=o_{P_n}(1).\label{eq:b2_8}
\end{align}
For any subsequence, one can then find a further subsequence for which \eqref{eq:b2_8} holds almost surely.
Mimicking the argument in the proof of Theorem 1.2.1 in \cite{Politis:1999aa} and $J(\cdot)$ being strictly increasing at its $1-\alpha$ quantile, the bootstrap critical value along the subsequence then satisfies
\begin{align}
c_{k_n,R,1-\alpha}(\theta_n)\stackrel{}{\to}c_{\mu,1-\alpha},
\end{align}
with probability 1. Since this holds for arbitrary subsequence, we have
\begin{align}
	c_{n,R,1-\alpha}(\theta_n)\stackrel{P_n}{\to}c_{\mu,1-\alpha},
\end{align}
for the original sequence. This establishes Condition \ref{cond:high1} (ii).

The conclusion of the theorem now follows by applying Proposition \ref{prop:size} with $\mathcal M=[\underline M,\overline M]$ and $\mathcal C=[0,\bar c]$.
\end{proof}

\section{$\mu$-smooth approximations of index functions}
We show below that the functions $S_\mu$ given in Table \ref{tab:musmooth} and footnote \ref{fnt:S2} are the $\mu$-smooth approximations of the corresponding index functions.

For $(i)$, $S(m,\Sigma)=\sum_{j=1}^{J}[m_{j}/\sigma_{j}]_{+}$ is
the sum of functions $S^{j}=\max\{m_{j}/\sigma_{j},0\},j=1,2,...J$.
For each $S^{j}$, it has a $\mu$-smooth approximation function $S_{\mu}^{j}=\mu\ln(\exp(\frac{m_{j}}{\mu\sigma_{j}})+1)$
with parameters $(1,\ln 2,0)$. By Lemma $2.1$ in \cite{Beck:2012xy}, $S(m,\Sigma)$ has the following $\mu$-smooth approximation
function
\begin{align}
   S_{u}(m,\Sigma) = u\sum_{j=1}^{J}\ln(\exp(\frac{m_{j}}{\mu\sigma_{j}})+1),
\end{align}
 with parameters $(J,J\ln(2),0)$.

For $(ii)$, $S(m,\Sigma)=\max_{j=1,..,J}\{m_{j}/\sigma_{j}\}_{+}=\max\{m_{1}/\sigma_{1},....,m_{J}/\sigma_{J},0\}$.
By Theorem $4.2$ in \cite{Beck:2012xy},  this suggests $S(m,\Sigma)$ has a $\mu$-smooth approximation function
$S_{\mu}(m,\Sigma)=\mu\ln(\sum_{j=1}^{J}\exp(\frac{m_{j}}{\mu\sigma_{j}})+1)$
with parameters $(1,\ln(J+1),0)$.

For the function in footnote \ref{fnt:S2}, $S_\mu$ is constructed through an operation called inf-convolution \citep[see e.g][]{Rockafellar:2009aa}.
Let $S(m,\Sigma)=\inf_{t\in\mathbb R^J_-}(m-t)'\Sigma^{-1}(t-m)$ and let $w:\mathbb R^J\to \mathbb R$ be defined by
\begin{align}
	w(m)\equiv \frac{m'\Sigma^{-1}m}{2}.\label{eq:ms5}
\end{align}
 Define the infimal convolution of $S$ and $w_\mu\equiv \mu w(\cdot/\mu)$ by
\begin{align}
	S_\mu(m,\Sigma)=\inf_{m'\in \mathbb R^J}\Big\{S(m')+\mu w(\frac{m-m'}{\mu})\Big\}.\label{eq:ms6}
\end{align}
This function serves as a $\mu$-smooth approximation of $S$. By Theorem 4.1 in \cite{Beck:2012xy}, $S_\mu$ then has the following dual formulation:
\begin{align}
	S_\mu(m,\Sigma)=\sup_{u\in\mathbb R^J}\Big\{m'u-S^*(u)-\mu w^*(u)\Big\}.
\end{align}
where $S^*$ and $w^*$ are the Fenchel conjugates of $S$ and $w$ respectively.

The Fenchel conjugate of $S$ is
\begin{align}
	S^*(u) = \sup_{m\in \mathbb R^J}m'u-\inf_{t\in\mathbb R^J_-}(m-t)'\Sigma^{-1}(t-m).\label{eq:ms1}
\end{align}
Hence, one has
\begin{align}
	-S^*(u)&=\inf_{m\in \mathbb R^J}\inf_{t\in\mathbb R^J_-}(m-t)'\Sigma^{-1}(m-t)-m'u=\inf_{t\in\mathbb R^J_-}\inf_{m\in \mathbb R^J}(m-t)'\Sigma^{-1}(m-t)-m'u.\label{eq:ms2}
\end{align}
Let $m^*$ be the optimal solution to the inner minimization on the right hand side of \eqref{eq:ms1}. It is a convex quadratic program whose first-order necessary condition is given as
\begin{align}
	2\Sigma^{-1}(m^*-t)=u,\label{eq:ms3}
\end{align}
which suggests $m^*=\Sigma u/2+t$. Substituting this into \eqref{eq:ms2}, we have
\begin{align}
	-S^*(u)=\inf_{t\in\mathbb R^J_-}\frac{u'\Sigma u}{4}-\frac{u'\Sigma u}{2}-t'u=-\frac{u'\Sigma u}{4}-\sup_{t\in\mathbb R^J_-}t'u=-\frac{u'\Sigma u}{4}-\delta_{\mathbb R^J_+}(u),\label{eq:ms4}
\end{align}
where $\delta_A(u)$ denotes the optimization indicator $\delta_A(u)=0$ if $u\in A$ and $\delta_A(u)=\infty$ otherwise. 
It is straightforward to show that its Fenchel conjugate is $w^*(u)=u'\Sigma u/2.$ Combining the results above, we obtain
\begin{align}
S_\mu(m,\Sigma)	=\sup_{u\in\mathbb R^J_+}\Big\{m'u-\frac{1+2\mu}{4}u'\Sigma u\Big\}.\label{eq:ms7}
\end{align}
Note that $w(0)=0$ and $\nabla_{m}w(m)=\Sigma^{-1}m$. Hence, one obtains  $\|\nabla_{m}w(m)-\nabla_{m}w(m^{'})\|=\|\Sigma^{-1}(m-m^{'})\|\leq\|\Sigma^{-1}\|\cdot\|m-m^{'}\|=(1/\sigma_{min})\cdot\|m-m^{'}\|$, where $\sigma_{min}$ is the smallest eigenvalue of $\Sigma$. By Corollary $4.1$ in \cite{Beck:2012xy}, it follows that $S_\mu(m,\Sigma)$ is a $\mu$-smooth approximation of $S$ with parameters $(1/\sigma_{min},D,0)$, where $D=\sup_{m}\sup_{r\in\nabla_{m}S}r^{'}\Sigma r/2$ and $D$ is assumed to be finite.
\end{document}